\documentclass[journal,twocolumn,10pt]{IEEEtranTCOM}
%

\normalsize
\usepackage{graphicx} 
\usepackage[cmex10]{amsmath}
\usepackage{amssymb}
\usepackage{amsfonts}
\usepackage{caption}
\usepackage{subcaption}
\usepackage{amsthm}
\usepackage{psfrag}
\usepackage{array}
\usepackage{epstopdf}
\usepackage{multirow}
\usepackage{multicol}
\usepackage{float}
\usepackage{tablefootnote}
\usepackage{threeparttable,booktabs}
\usepackage{mathtools}
\usepackage{mdwmath}
\usepackage{mdwtab}
\usepackage{algorithm2e}
\usepackage{filecontents}
\usepackage{tablefootnote}
\newtheorem{theorem}{Theorem}
\newtheorem{lemma} {Lemma}

\newtheorem{corollary}{Corollary}
\usepackage{lipsum}
\usepackage{mathtools}
\usepackage{footnote} 
\makesavenoteenv{tabular} 
\usepackage{cuted}

%
%
\newcommand{\bm}[1]{\mathbf{#1}}
\newcommand{\bsy}[1]{\boldsymbol{#1}}

\newcommand{\be}{\begin{equation}}
\newcommand{\ee}{\end{equation}}
\newcommand{\bea}{\begin{eqnarray}}
\newcommand{\eea}{\end{eqnarray}}

\newcommand{\z}{{\bm z}}

\newcommand{\dd}{{\bm d}}
\newcommand{\bA}{{\bm A}}
\newcommand{\bI}{{\bm I}}

\newcommand{\bD}{{\bf D}}

\newcommand{\bS}{{\bf S}}
\newcommand{\bH}{{\bf H}}

\newcommand{\bR}{{\bf R}}

\newcommand{\bg}{{\bf g}}
\newcommand{\by}{{\bf y}}

\newcommand{\bh}{{\bf h}}

\newcommand{\bU}{{\bf U}}

\newcommand{\bd}{{\bf d}}

\newcommand{\bx}{{\bf x}}

\newcommand{\bnu}{\mbox{\boldmath$\nu$}}

\newcommand{\AHA}{\mbox{$\bA^{\rm H}\bA$}}

\newcommand{\mI}{\mbox{$\frac{\sigma_\nu^2}{\sigma_d^2}\bI$}}
\newcommand{\snr}{\mbox{$\frac{\sigma_d^2}{\sigma_\nu^2}$}}
\newcommand{\snri}{\mbox{$\frac{\sigma_\nu^2}{\sigma_d^2}$}}

\newcommand{\modulation}{\mathcal{M}}
\newcommand{\timeshift}{\mathcal{T}}
\newcommand{\cexp}[1]{e^{\frac{j2\pi {#1}}{N}}}
\newcommand{\cexpM}[1]{e^{\frac{j2\pi {#1}}{M}}}
\newcommand{\cexpMN}[1]{e^{\frac{j2\pi {#1}}{MN}}}
\newcommand{\bdft}{\bm F_b}
\newcommand{\tempvec}{\bsy \vartheta}
\newcommand{\dftmat}{\bm W}
\newcommand{\permut}{\bm P}
\newcommand{\identity}{\bm I}
\newcommand{\kron}{\bsy \otimes}
\newcommand{\diagdft}[1]{ \bsy U_{#1}}
\newcommand{\matrixmmsebias}[1]{\bsy \Theta_{#1}}
\newcommand{\channelfreqmatrix}{\bsy \Lambda}
\newcommand{\channelfreqcoef}[1]{\tilde{h}(#1)}
\DeclarePairedDelimiter\floor{\lfloor}{\rfloor}

\newcommand{\diagonalgfdm}{\bar{\bm D}}
\newcommand{\unitarymatrix}{\bsy \Gamma}

\newcommand{\flopdft}[1]{\text{FL}_{#1}}

\markboth{Submitted to IEEE TVT on Dec. 2016}%
{Submitted to IEEE TVT on Dec. 2016}
%
\ifCLASSINFOpdf
\else
\fi

\usepackage{fixltx2e}

\usepackage{stfloats}
\hyphenation{op-tical net-works semi-conduc-tor}

\begin{document}
\bstctlcite{IEEEexample:BSTcontrol}
%
\title{Design of Low Complexity GFDM Transceiver}

 \author{\IEEEauthorblockN{Shashank Tiwari and Suvra Sekhar Das\\}
 \IEEEauthorblockA{
Indian Institute of Technology Kharagpur, India.
}}

\maketitle

\begin{abstract}

 In this work, we propose a novel low complexity Generalised Frequency Division Multiplexing (GFDM) transceiver design. GFDM modulation matrix is factorized into FFT matrices and a diagonal matrix to design low complexity GFDM transmitter. Factorization of GFDM modulation matrix is used to derive low complexity Matched Filter (MF), Zero Forcing (ZF) and Minimum Mean Square Error (MMSE) based novel low complexity self-interference equalizers. A two-stage receiver is proposed for multipath fading channel in which channel equalization is followed by our proposed low-complexity self-interference equalizers. Unlike other known low complexity GFDM transceivers, our proposed transceiver attains low complexity for arbitrary number of time and frequency slots.  The complexity of our proposed transceiver is log-linear with a number of transmitted symbols and achieves 3 to 300 times lower complexity compared to the existing structures without incurring any performance loss. Our proposed Unbiased-MMSE receiver outperforms our proposed ZF receiver without any significant increase in complexity especially in the case of large number of time slots.  In a nutshell, our proposed transceiver enables low complexity flexible GFDM transceiver implementation.
\end{abstract}

 \ifCLASSOPTIONpeerreview
 \begin{center} \bfseries EDICS Category: 3-BBND \end{center}
\fi
%
\IEEEpeerreviewmaketitle
\section{Introduction}

GFDM is a  block based waveform fits into many next generation cellular network requirements \cite{banelli_modulation_2014,andrews_what_2014} such as low OoB radiation \cite{michailow_generalized_2014}, immunity to CFO \cite{matthe_asynchronous_2015,choi_effect_2015}, compatibility with multiple input multiple output (MIMO) \cite{matthe_near-ml_2015,yenilmez_performance_2016,matthe_multi-user_2015} and  flexibilty to use different time-frequency slots and pulse shapes \cite{mendes_gfdm:_2016}. However, it requires transceivers with high computational complexity. 
 This is due to non-orthogonality, which is introduced by the circular filtering of each sub-carrier. Moreover, GFDM suffers from self-interference which mandates the use of complex receivers to equalize self-interference. When exposed to multipath channel, GFDM signal further distorts which increases the complexity of signal reconstruction.  
 
A two-stage receiver can be used for GFDM reception in multipath fading channel in which channel equalization is followed by  self-interference equalizers \cite{michailow_bit_2012}. Channel equalization can be implemented by using well known low complexity frequency domain equalizaers (FDE)  same as in the case of orthogonal frequency devision multiplexing (OFDM). However, implementation of self-intereference equalization is costly \cite{farhang_low-complexity_2016}. If $M$ and $N$ represent number of time and frequency slots respectively, the implementation of the transmitter, Matched Filter (MF) self-interference equalization and Zero-Forcing (ZF) self-interference equalization involves a complexity of $O(M^2N^2)$ \cite{tiwari_precoded_2015} while the complexity of  Minimum Mean Square Error (MMSE) self-interference equalization is $O(M^3N^3)$. When $N\sim10^3$'s and $M\sim10$'s (or $N\sim 10$'s and $M\sim 10^3$'s), the count of computations becomes very high. This high complexity hinders  practical implementation of  GFDM transceivers. Therefore if complexity of GFDM transceivers can be reduced it would help widespread use of this versatile waveform design framework     \cite{mendes_gfdm:_2016}. 
  
  The sparsity of  prototype pulse shape in frequency domain is exploited to design a low complexity transmitter in \cite{michailow_generalized_2012} and a low-complexity MF receiver in \cite{gaspar_low_2013}. The complexity is reduced to $O(MNlog_2(MN)+MN^2)$ but it comes with increase in BER. Periodicity of complex exponential is exploited in \cite{lin_orthogonality_2015,matthe_precoded_2016} to reduce the complexity further to $O(MN\log_2(N)+M^2N)$. Similar order of complexity is achived by using  block circulant property of multiplication of modulation matrix and its Hermitian  in \cite{farhang_low-complexity_2016}. Behrouz and Hussein proposed frequency spreading based GFDM transmitter in \cite{farhang-boroujeny_derivation_2015} based on the principles of frequency spreading filter bank multi carrier (FMBC) transmitter proposed in \cite{bellanger_physical_2010}. The complexity of the transmitter is $O(MN\log_2(N)+M^2N)$ similar to \cite{farhang_low_2015, matthe_precoded_2016,lin_orthogonality_2015}. Recently, Wei et al. in \cite{wei_fast_2016} have proposed a low complexity one-stage receiver based on frequency-domain discrete Gabor transform (FD-DGT) called Localized DGT receiver (LDGT) having the complexity of $O(MN\log(MN))$. Authors in \cite{wild_reduced_2015} use two assumptions for designing low complexity receiver (i) requirement of perfect knowledge of coherence bandwidth by LDGT receiver  \footnote{See equation (50-52) in \cite{wei_fast_2016}. Further, while computing complexity of LDGT, the dual function $\tilde{\bm \Gamma}_{opt}$ in (50) is assumed to be known for a broadband channel.} and (ii) subcarrier bandwidth to be less than channel coherence bandwidth\footnote{See equation (15) in \cite{wei_fast_2016}}. Performance of LDGT receiver will depend on estimation of coherence bandwidth which is further dependent upon stationarity conditions \cite{siamarou_broadband_2003, wang_propagation_2013}. There will be additional complexity requirement to estimate coherence bandwidth. Constraint on sub-carrier bandwidth decreases the flexibility of GFDM.
  
   A flexible GFDM system is free to choose any arbitrary value of $M$ and $N$, arbitrary pulse shape and arbitrary subcarrier bandwidth \cite{michailow_generalized_2014} for different application requirements \cite{mendes_gfdm:_2016}. To the best of author's knowledge complexity of  GFDM transmitter and linear receiver is either found to increase non-linearly with $M$ or increase non-linearly with $N$. Additionally, only transceiver present in \cite{farhang_low-complexity_2016} assumes arbitrary pulse shape but has high complexity when $M$ is high.  Hence, no existing GFDM transceiver enables flexibility of GFDM. Hence, to enable the flexibility of GFDM,  we aim to reduce the complexity growth further on $M$ and $N$. 
  
  \subsection*{Our Contribution}
 In this work, we assume prior knowledge of GFDM specific parameters such as $N$, $M$ and pulse shape at transmitter as well as receiver like in \cite{michailow_generalized_2012,gaspar_low_2013,lin_orthogonality_2015,matthe_precoded_2016,farhang_low-complexity_2016,wei_fast_2016}. Our design consider arbitrary value $N$ and $M$. Unlike \cite{gaspar_low_2013, michailow_generalized_2014}, our design assumes arbitrary pulse shapes. We consider that subcarrier bandwidth can be smaller as well as larger than channel coherence bandwidth, unlike in \cite{wei_fast_2016}.
  \begin{enumerate}
  \item We factorize the modulation matrix of GFDM in terms of a Block Circulant Matrix and a Block Diagonal IFFT matrix. This lead us to FFT based low complexity GFDM transmitter implementation.
  \item We derive closed form expression for MF, ZF and MMSE self-interference equalizers using above-mentioned GFDM modulation matrix factorization. These closed form expressions provide low complexity FFT based low complexity implementations.
  \item We also derive closed form expression for bias correction correction for MMSE self-equalizer output to improve BER performance of biased-MMSE receiver\cite{cioffi_mmse_1995}.
  \item We present low complexity transceiver structure in multipath channel. The overall complexity of our transceiver is found to be $O(MN\log_2(MN))$ which is significantly below that of existing transceivers. 
  \item We compare BER performance of our proposed transceiver with the direct implementation of receivers. Our proposed low complexity tranceiver does not make any assumption related to parameters of GFDM i.e. number of time-frequency slots and pulse shape. Hence, with our transciever the properties of GFDM is same as in \cite{michailow_generalized_2014}. Thus, our trancievers attain same BER performance as direct implementation of GFDM trancievers in \cite{michailow_generalized_2014}.
  \end{enumerate}
  

\section{System Model}\label{sec:systemmodel}
\begin{figure*}[t]
\includegraphics[width=\linewidth]{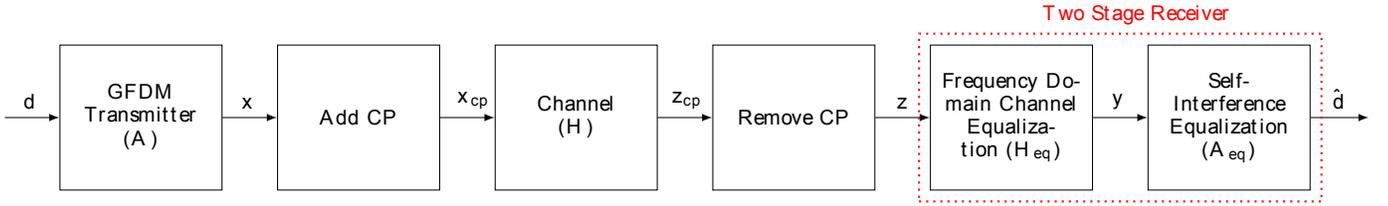}
\caption{Baseband Block Diagram of GFDM Transceiver in Multipath Fading Channel}
\label{fig:systemmodel}
\end{figure*}
In this work, vectors are represented by bold small letters ($\bx$), matrices are represented by bold capital letters ($\bm X$) and scalers are represented as normal small letters ($x$). $j=\sqrt{-1}$.  
The superscripts $(.)^{\rm T}$and $(.)^{\rm H}$ indicate transpose and conjugate transpose  operations, respectively. Table~\ref{tab:listofsymbols} lists operators and important symbols used in rest of the paper.
\begin{table}[h]
\centering
\caption{List of Important Symbols and Operators} \label{tab:listofsymbols}
\begin{tabular}{| >{\centering\arraybackslash}m{0.25\linewidth}| >{\centering\arraybackslash}m{0.6\linewidth}|} \hline
\textbf{Operators} & \textbf{Description} \\ \hline \hline
$\kron$ & Kronecker product operator  \\ \hline
$diag\{.\}$  &
A diagonal matrix whose diagonal elements
are formed by the elements of the vector inside or diagonal elements of the matrix inside \\ \hline
$\mod$ & Modulus Operator \\ \hline
$\floor*{.}$ & Rounds the value inside to the nearest integer towards minus infinity \\ \hline \hline
\textbf{Symbols} & \textbf{ Description} \\ \hline \hline
$\bI_{\{.\}}$ & Identity matrix with order $\{.\}$ \\ \hline
$\dftmat_{\{.\}}$  & $\{.\}$-order normalized IDFT matrix \\ \hline
$M$ & Number of Time Slots \\ \hline
$N$ & Number of sub-carriers \\ \hline
$g[n]$, $n=0,~1, \cdots MN-1$. & prototype filter coefficients \\ \hline
$\bd$ & $MN$ length data vector \\ \hline
$\bA$ & GFDM Modulation Matrix \\ \hline
$\bx$ & transmitted GFDM vector \\ \hline
$\bx_{cp}$ & transmitted GFDM vector after CP addition \\ \hline
$\bh$ & channel impulse response vector \\ \hline
$\bH$ & circulant channel convolution matrix \\ \hline 
$\z_{cp}$ & received vector \\ \hline
$\z$ & received vector after CP removal \\ \hline
$\channelfreqcoef{r}$, $r=0,~1,\cdots,~ MN-1$ & channel frequency coefficients \\ \hline
$\channelfreqmatrix$ & Channel frequency diagonal Matrix \\ \hline
$\bnu$ & AWGN noise vector \\ \hline
$\channelfreqmatrix_{eq}$ & Channel equalization diagonal matrix for FDE \\ \hline
$\hat{\dd}$ & Estimated data vector \\ \hline 
$\bA_{eq}$ & Self-Interference Equalization Matrix \\ \hline
$\matrixmmsebias{gfdm}$ & Bias correction diagonal Matrix for MMSE self-interference equalization\\ \hline
$\bm{G}$ & Block circulant Matrix with diagonal matrix of order $N$ \\ \hline
$\bU_N$ & Block diagonal matrix with IDFT blocks of order $N$ \\ \hline
$\bD$ & $MN$-order diagonal matrix \\ \hline
$\permut$ & Permutation Matrix \\ \hline   
\end{tabular}
\end{table}

\subsection{Transmitter}
We have a GFDM system with $N$ sub-carriers and $M$ timeslots. The $MN$ length prototype filter is $g(n), n=0,~1,\ldots,MN-1$. QAM modulated data symbol is $d_{m,k}\in \mathbb{C},~m=0,1,\ldots, M-1, ~k=0,1,~\ldots,N-1$. We assume that data symbols are independent and identical i.e. $E[d_{m,k} d_{m',k'}^\ast] = \sigma_d^2 \delta_{m-m',k-k'} $. The transmitted GFDM signal can be written as,
\begin{equation}
x[n] = \frac{1}{\sqrt{N}}\sum_{m=0}^{M-1}{\sum_{k=0}^{N-1}{d_{m,k}g[n-mN]_{MN}e^{\frac{j2\pi n k}{N}}}}.
\end{equation}
The transmitted signal can also be written as \cite{michailow_generalized_2014},
\begin{equation}
\label{eqn:transmittedsignalmatrixform}
\bx= \bA_{MN\times MN} \bd_{MN \times 1},
\end{equation}
where $\bd=[\bd_0 ~ \bd_1 \ldots \bd_{M-1}]^{\rm T}$ is the data vector, where $\bd_m=[d_{m,0}~d_{m,1}\ldots d_{m,N-1}]^{\rm T}$, where, $m=0,~1\ldots M-1$, is the $N$ length data vector for $m^{\rm th}$ time slot and $\bA$ is the modulation matrix which can be given as,
\begin{equation} \label{eq:Amat}
\begin{aligned}
\bA &= [ \bg ~ ~\modulation_1 \bg  ~  \cdots ~ \modulation_{N-1} \bg ~ |  \timeshift_1 \bg~ ~ \timeshift_1 \modulation_1 \bg ~ \cdots  \timeshift_1 \modulation_{N-1} \bg | \\
&\cdots |  \timeshift_{M-1} \modulation_1 \bg  \cdots \timeshift_{M-1} \modulation_{N-1} \bg ],
\end{aligned}
\end{equation}
where, $\bg=[g[0]~g[1]~\cdots~g[MN-1]]^{\rm T}$ is $MN$ length vector which holds the prototype filter coefficients, $\modulation_l g[n]=g[n]\cexp{ln}$ is the modulation operator and $\timeshift_r=\bg(n-rN)_{MN}$ is the cyclic shift operator.

  CP of length $N_{CP}$ is prepended to $\bx$. After adding CP, transmitted vector, $\bx_{cp}$, can be given as,
\be \label{x_cp}
\bx_{cp} = [\bx(MN-N_{cp}+1 : MN)\hspace{0.2 cm} ;~ \bx].
\ee
\subsection{Channel}
  Let, $\bh=[h_0,~h_1,\cdots h_{L-1}]^{\rm T}$ be $L$ length  channel impulse response vector, where, $h_i,~\text{for}~0 \leq i \leq L-1$, represents the complex baseband channel coefficient of $(i+1)^{\rm th}$ path \cite{sklar_digital_2001}, which we assume is zero mean circular symmetric complex Gaussian (ZMCSC). We also assume that channel coefficients related to different paths are uncorrelated. We consider, $N_{cp}\geq L$. Received vector of length $N_{CP}+NM+L-1$ is given by,
\be \label{y_cp}
\z_{cp} = \bh \ast \bx_{cp} + \bnu_{cp},
\ee
where $\bnu_{cp}$ is AWGN vector of length $MN+N_{cp}+L-1$ with elemental variance $\sigma_\nu^2$. 
\subsection{Receiver}
 The first $N_{cp}$ samples and last $L-1$ samples of $\by_{cp}$ are removed at the receiver i.e. $\by = [\by_{cp}(N_{cp}+1 : N_{cp}+MN)]$. Use of cyclic prefix converts linear channel convolution to circular channel convolution when $N_{cp}\geq L$\cite{prasad_ofdm_2004}. The $MN$ length received vector after removal of CP can be written as,
  \be \label{eqn: rec}
  \begin{aligned}
  \z &= \bH \bA \dd + \bnu,
  \end{aligned}
  \ee
where $\bH$ is  circulant convolution matrix of size $MN \times MN$ and $\bnu$ is WGN vector of length $MN$ with elemental variance $\sigma_\nu^2$. Since $\bH$ is a circulant matrix, $\by$ can be further written as,
\be \label{eqn: rec}
\z = \dftmat_{MN} \channelfreqmatrix \dftmat_{MN}^{\rm H}\bA \dd + \bnu,
\ee
where,  $\channelfreqmatrix=diag\{\channelfreqcoef{0}~,\channelfreqcoef{1}\cdots \channelfreqcoef{MN-1}\}$  is a diagonal channel frequency coefficients matrix whose $r^{\rm th}$ coefficient can be given as,  $\channelfreqcoef{r}=\sum_{s=0}^{L-1}{h(s)\cexpMN {sr}}$ where, $r=0,~1\cdots MN-1$.

In this work, we consider two stage receiver in which channel equalization is followed by  GFDM demodulation \cite{michailow_generalized_2014, gaspar_low_2013,farhang_low-complexity_2016,michailow_bit_2012}. Channel equalized vector, $\by$, can be given as \cite{sari_frequency-domain_1994},

\be \label{eqn:systemmodel:channelequalized}
\begin{aligned}
\by &= \dftmat_{MN} \channelfreqmatrix_{eq} \dftmat_{MN}^{\rm H} \z &= a\bA \dd + \bsy{b}+\tilde{\bnu},
\end{aligned} 
\ee
where, $\channelfreqmatrix_{eq}=
\begin{cases}
\channelfreqmatrix^{-1} ~ \text{for ZF FDE}\\
[\channelfreqmatrix^{\rm H}\channelfreqmatrix+\snri \bI_{MN}]^{-1} \channelfreqmatrix^{\rm H} ~ \text{for MMSE FDE}
\end{cases}$
where, $\tilde{\bnu}= \dftmat_{MN} \channelfreqmatrix_{eq} \dftmat_{MN}^{\rm H} \bnu$, 
$$ a=\begin{cases} 
1 ~ ~~~ \text{for ZF FDE} \\
\frac{1}{MN}\sum_{r=0}^{MN-1}{{\frac{|\channelfreqcoef{r}|^2}{|\channelfreqcoef{r}|^2+\snri}}} ~ \text{for MMSE FDE,} \end{cases}$$
$\bsy{b}$ is residual interference, given in (\ref{eqn:b}) and $\tilde{\bnu}=\dftmat_{MN} \channelfreqmatrix_{eq} \dftmat_{MN}^{\rm H}\bnu $ is post-processing noise. 

\begin{equation} \label{eqn:b}
\bsy{b}= \begin{cases} 
\bsy{0} ~ ~~~ \text{for ZF FDE} \\
[\dftmat_{MN} \channelfreqmatrix_{eq} \dftmat_{MN}^{\rm H}-\frac{1}{MN}\sum_{r=0}^{MN-1}{{\frac{|\channelfreqcoef{r}|^2}{|\channelfreqcoef{r}|^2+\snri}}}\bI_{MN}] \bA \dd \\ ~  ~~~~~~~~~~\text{for MMSE FDE} \end{cases}
\end{equation}


Channel equalized vector, $\by$, is further equalized to remove the effect of self-interference. Estimated data, $\hat{\bd}$, can be given as,
\be \label{eqn:gfdmequalizedsignal}
\hat{\bd} = \bA_{eq} \by,
\ee
where, $\bA_{eq}$ is GFDM equalization matrix which can be given as,
\be \label{eqn:gfdmeualizermatrix}
\bA_{eq}= \begin{cases}
\bA^{\rm H} ~ \text{for MF Equalizer} \\
  \bA^{-1} ~ \text{for ZF Equalizer} \\
  [ \mI+\AHA]^{-1} \bA^{\rm H}  ~ \\ \text{for biased MMSE  Equalizer}\\
 \matrixmmsebias{gfdm}^{-1} [ \mI+\AHA]^{-1} \bA^{\rm H} ~ \\ \text{for unbiased MMSE Equalizer},
 \end{cases}
\ee
where, $ \matrixmmsebias{gfdm}^{-1}$ is a diagonal bias correction matrix for GFDM-MMSE equaliser, where, 
\be \label{eqn:gfdmmatrixmmsebias}
\matrixmmsebias{gfdm}=diag\{[ \mI+\AHA]^{-1} \AHA\}.
\ee

\section{Low Complexity GFDM Transmitter} \label{sec:transmitter}
In this section, we present low complex GFDM transmitter. $\bA$ matrix is factorized into special matrices to obtained low complexity transmitter without incurring any assumptions related to GFDM parameters. In the following subsections, we will explain the design and implementation of the transmitter. 

\subsection{Low Complexity Transmitter Design} GFDM trasmitted signal is critcally sampled Inverse Descrete Gabor Transform (IDGT) of $\bd$ \cite{matthe_generalized_2014}. Using the IDGT matrix factorization given in \cite{stewart_computationally_1995}, the Modulation Matrix, $\bA$ can be given as,
  \begin{equation}
  \label{eqn:Afac}
  \begin{aligned}
  \bA&= ~~~~~~~~~~\bm G ~~~~~~~~~\times ~~~~~\diagdft{N} \\
  &={\begin{pmatrix}
     \bm \Psi _0 & \bm \Psi_{M-1}&\cdots & \bm \Psi_{1}\\
\bm \Psi_1 & \bm \Psi_0 &\cdots &\bm \Psi_2\\
 \vdots &\vdots& \ddots & \vdots\\
 \bm \Psi_{M-1} &\bm \Psi_{M-2}& \cdots & \bm \Psi_0
 \end{pmatrix}}   {\begin{pmatrix}
\bm W_N & & \\
 & \ddots & \\
&  & \\
  &  & \bm W_N
 \end{pmatrix}},
  \end{aligned}
  \end{equation}
where, $\bm \Psi_m=\text{diag}\{g[mN],g[mN+1], \ldots,g[mN+N-1]\}$ for $0\leq m\leq M-1$, is $N \times N$ diagonal matrix and $\bm W_N$ is $N \times N$ normalized IDFT matrix. 
The matrix $\bm G$ is block circulant matrix with diagonal blocks. $\bm G$ can be further factorized as\cite{mazancourt_inverse_1983},

\begin{equation}
\label{eqn:factorizationG}
\begin{aligned}
\bm G &= \bm F_b \bD \bm F_b^{\rm H},\\
\text{where},~~ \bm F_b &= \dftmat_M \kron \identity_N ~~~~\text{and}
\end{aligned}
\end{equation}
$\bD =\text{diag}\{\mathbb{D}_0,~\mathbb{D}_1,\ldots \mathbb{D}_{M-1}\}$, is a block diagonal matrix. The $q^{\rm th}$ block of $\bD$ ie. $\mathbb{D}_q$ can be obtained as $M$ point DFT of $\bm \Psi_m$ blocks and can be given as,
\begin{equation}
\label{eqn:evaluationDq}
\mathbb{D}_q=\sum_{m=0}^{M-1}{\omega^{-qm}\bm \Psi_m}, ~ ~ 0\leq q \leq M-1,
\end{equation}
where $\omega=\cexpM{}$. Since $\bm \Psi_m$'s are diagonal, $\mathbb{D}_q$'s are also diagonal and hence $\bD=diag\{\lambda(0),~\lambda(1),\cdots \lambda(MN-1)\}$, is also diagonal whose $r^{\rm th}$ diagonal value can be given as. 
\be \label{eqn:DiagonalvaluesofDmatrix}
\lambda(r)=\sum_{m=0}^{M-1}{g[mN+r \mod N]\omega^{-m\floor*{\frac{r}{N}}}},~0\leq r \leq MN-1.
\ee
Using (\ref{eqn:Afac}-\ref{eqn:factorizationG}), the modulation matrix $\bA$ can be given as,
\begin{equation}
\label{eqn:Modmatrix}
\bA=  \bm F_b \bD \bm F_b^{\rm H} \diagdft{N}.
\end{equation}

 Using (\ref{eqn:transmittedsignalmatrixform},\ref{eqn:Modmatrix}), the transmitted signal $\bx$ can be given as,
\begin{equation}
\label{eqn:transmittedsignal}
\bx=  \bm F_b \bD \bm F_b^{\rm H} \diagdft{N} \bd.
\end{equation}

\begin{lemma} \label{th:factorization G}
Let $\bS=diag\{s(0),~s(1)\cdots s(MN-1)\}$ be a diagonal matrix of size $MN\times MN$. The matrix $\bR=\bdft \bS \bdft^{\rm H}$ can be written as,
\begin{equation}
\bR= \permut^{\rm T} \diagdft{M} \bar{\bS} \diagdft{M}^{\rm H} \permut,
\end{equation}
where, $\diagdft{M} = diag\{\dftmat_M,~\dftmat_M \cdots \dftmat_M\}$ is a block diagonal matrix which holds $N$, $M$ order normalised IDFT matrix on its diagonal, $\permut$ is a subset of perfect shuffle permutation matrix, which can be defined as, $\permut =[p_{l,q}]~0\leq l,q\leq MN-1$, where the matrix element $p_{l,q}$ can be given as,
\begin{equation}
\label{eqn:permut}
p_{l,q}=\begin{cases} 1 ~\text{if}~ q=lN+\floor*{\frac{l}{M}}\\ 0 ~~ \text{otherwise}\end{cases}.
\end{equation} 
and $\bar{\bS}=diag\{\bar{s}(0),~\bar{s}(1)\cdots\bar{s}(MN-1)\}$ is a diagonal matrix which can be given as,
\begin{equation} \label{eqn:sr}
\bar{s}(r) = s((r \mod M)N+\floor*{\frac{r}{M}}), ~ \text{for} ~ 0\leq r \leq MN-1.
\end{equation}   
\end{lemma}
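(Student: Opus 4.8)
The plan is to recognise the entire statement as an instance of the commutation (perfect-shuffle) identity for Kronecker products, after which only an index-bookkeeping check remains. The point is that both matrices sandwiching $\bS$ are built from the same two factors $\dftmat_M$ and $\identity_N$ and differ only in the order in which those factors are tensored: by (\ref{eqn:factorizationG}) we have $\bdft = \dftmat_M \kron \identity_N$, while $\diagdft{M} = diag\{\dftmat_M,\ldots,\dftmat_M\}$ with $N$ blocks is exactly $\identity_N \kron \dftmat_M$. Since swapping the order of the factors of a Kronecker product is implemented by conjugation with a stride permutation, the whole proof reduces to identifying the matrix $\permut$ of (\ref{eqn:permut}) with that permutation.

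First I would establish the factor-swap identity $\bdft = \permut^{\rm T}\diagdft{M}\permut$, equivalently $\permut(\dftmat_M\kron\identity_N)\permut^{\rm T} = \identity_N\kron\dftmat_M$. This is the standard commutation-matrix relation and is checked directly on matrix entries: writing a row/column index as $aN+b$ and $cN+d$ (with $0\le a,c\le M-1$, $0\le b,d\le N-1$) on the $\dftmat_M\kron\identity_N$ side, and as $eM+f$, $gM+h$ on the $\identity_N\kron\dftmat_M$ side, one verifies that $\permut$ as defined in (\ref{eqn:permut}) carries the entry $(\dftmat_M)_{a,c}\delta_{b,d}$ onto $\delta_{e,g}(\dftmat_M)_{f,h}$. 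Because $\permut$ is a real permutation matrix it is orthogonal, so $\permut^{\rm H}=\permut^{\rm T}$, and taking Hermitian transposes gives $\bdft^{\rm H}=\permut^{\rm T}\diagdft{M}^{\rm H}\permut$ as well. Substituting both expressions into $\bR=\bdft\bS\bdft^{\rm H}$, the inner pair $\permut\permut^{\rm T}=\identity_{MN}$ cancels, leaving $\bR=\permut^{\rm T}\diagdft{M}\,(\permut\bS\permut^{\rm T})\,\diagdft{M}^{\rm H}\permut$. It therefore suffices to set $\bar{\bS}=\permut\bS\permut^{\rm T}$ and show that its diagonal agrees with (\ref{eqn:sr}). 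As $\bS$ is diagonal and $\permut$ is a permutation, $\bar{\bS}$ is again diagonal, and its $r$-th entry is $s(i)$ for the unique $i$ with $p_{r,i}=1$; reading that index off from (\ref{eqn:permut}) gives $\bar{s}(r)=s\big((r\bmod M)N+\floor*{r/M}\big)$, which is precisely (\ref{eqn:sr}).

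The only genuine work is this final index matching, and that is where I expect the main obstacle. The delicate point is getting the perfect-shuffle bookkeeping right in both places simultaneously: confirming that the permutation written in (\ref{eqn:permut}) acts in the correct \emph{direction}, so that $\permut$ (and not $\permut^{\rm T}$) conjugates $\dftmat_M\kron\identity_N$ into $\identity_N\kron\dftmat_M$, and that the induced reindexing on the diagonal of $\bS$ is $r\mapsto (r\bmod M)N+\floor*{r/M}$ rather than its inverse. Once the decomposition $r=cM+d$ is carried consistently through both the Kronecker-factor swap and the definition of $\permut$, everything else is routine substitution relying only on orthogonality of $\permut$ and diagonality of $\bS$.
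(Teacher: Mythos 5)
Your proposal is correct and follows essentially the same route as the paper: both rest on the Kronecker commutation identity $\dftmat_M \kron \identity_N = \permut^{\rm T}(\identity_N \kron \dftmat_M)\permut$, the identification $\diagdft{M}=\identity_N\kron\dftmat_M$, and the definition $\bar{\bS}=\permut\bS\permut^{\rm T}$ with the index formula read off from (\ref{eqn:permut}). The only quibble is phrasing: after substitution nothing actually ``cancels'' --- the middle factors $\permut\bS\permut^{\rm T}$ are simply regrouped as $\bar{\bS}$ --- but the displayed expression you arrive at is exactly right.
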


\begin{proof}
See Appendix~\ref{app:lem1}.
\end{proof}
\begin{theorem} \label{Th:modmat}
The GFDM modulation matrix $\bA$ can be given as,
\begin{equation}
\label{eqn:lowcomplextransmit}
\bA= \permut^{\rm T} \diagdft{M} \bar{\bD} \diagdft{M}^{\rm H} \permut \diagdft{N},
\end{equation}
where, $\bar{\bD} = diag\{\bar{\lambda}(0),~\bar{\lambda}(1)\cdots\bar{\lambda}(MN-1)\}$ is diagonal matrix, whose $r^{\rm th}$ element can be given as,
\begin{equation}
\label{eqn:transmitclosedformdbar}
\bar{\lambda}(r)=\sum_{m=0}^{M-1}{g[mN+\floor*{\frac{r}{M}}]\omega^{m(r \mod M)}}.
\end{equation}
\end{theorem}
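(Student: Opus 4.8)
The plan is to read off (\ref{eqn:lowcomplextransmit}) as a direct consequence of Lemma~\ref{th:factorization G} applied to the block-circulant factor already isolated in (\ref{eqn:Modmatrix}), and then to evaluate the resulting diagonal entries by a single index substitution.

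First I would start from $\bA = \bm F_b \bD \bm F_b^{\rm H} \diagdft{N}$ in (\ref{eqn:Modmatrix}) and focus on the factor $\bm G = \bm F_b \bD \bm F_b^{\rm H}$. By (\ref{eqn:factorizationG}) we have $\bm F_b = \dftmat_M \kron \identity_N = \bdft$ and $\bD = diag\{\lambda(0),\ldots,\lambda(MN-1)\}$ with the $\lambda(r)$ given in (\ref{eqn:DiagonalvaluesofDmatrix}); hence $\bm G = \bdft \bD \bdft^{\rm H}$ is precisely the object $\bR = \bdft \bS \bdft^{\rm H}$ of Lemma~\ref{th:factorization G} with the diagonal identified as $\bS = \bD$, i.e. $s(r) = \lambda(r)$. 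Invoking the lemma therefore gives $\bm G = \permut^{\rm T} \diagdft{M} \bar{\bD} \diagdft{M}^{\rm H} \permut$, where $\bar{\bD} = diag\{\bar{\lambda}(0),\ldots,\bar{\lambda}(MN-1)\}$ and, from (\ref{eqn:sr}), $\bar{\lambda}(r) = \lambda((r \mod M)N + \floor*{\frac{r}{M}})$. Substituting this factorization of $\bm G$ back into (\ref{eqn:Modmatrix}) and leaving the trailing $\diagdft{N}$ untouched yields (\ref{eqn:lowcomplextransmit}) immediately.

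The only step needing genuine calculation is to reduce $\bar{\lambda}(r) = \lambda((r \mod M)N + \floor*{\frac{r}{M}})$ to the closed form (\ref{eqn:transmitclosedformdbar}). I would write $p = r \mod M$ and $q = \floor*{\frac{r}{M}}$, so the argument of $\lambda$ is $r' = pN + q$ with $0 \le p \le M-1$ and, crucially, $0 \le q \le N-1$. Feeding $r'$ into (\ref{eqn:DiagonalvaluesofDmatrix}) requires $r' \mod N$ and $\floor*{\frac{r'}{N}}$; the bound $0 \le q \le N-1$ forces $r' \mod N = q = \floor*{\frac{r}{M}}$ and $\floor*{\frac{r'}{N}} = p = r \mod M$, so the two nested floor/mod maps cancel and the defining sum collapses to $\sum_{m=0}^{M-1} g[mN + \floor*{\frac{r}{M}}]\,\omega^{m(r \mod M)}$, which is (\ref{eqn:transmitclosedformdbar}).

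I expect the index bookkeeping in the last paragraph to be the main (indeed the only) obstacle: one must verify that the range $0 \le r \le MN-1$ really does pin $q = \floor*{\frac{r}{M}}$ to $\{0,\ldots,N-1\}$, since it is exactly this tight bound that makes $r' \mod N = q$ and $\floor*{\frac{r'}{N}} = p$ hold without correction terms. I would also double-check the twiddle-factor sign at this point, as it is inherited verbatim from $\lambda$ in (\ref{eqn:DiagonalvaluesofDmatrix}) and merely re-indexed by the permutation of Lemma~\ref{th:factorization G}. Once the bound is confirmed, the theorem is just Lemma~\ref{th:factorization G} substituted into (\ref{eqn:Modmatrix}).
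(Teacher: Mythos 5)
Your proposal is correct and is essentially identical to the paper's own proof in Appendix~B: you apply Lemma~\ref{th:factorization G} to the block-circulant factor $\bm F_b \bD \bm F_b^{\rm H}$ in (\ref{eqn:Modmatrix}), reattach the trailing $\diagdft{N}$, and collapse $\bar{\lambda}(r)=\lambda((r \bmod M)N+\floor*{\frac{r}{M}})$ using exactly the two mod/floor identities the paper invokes, justified by the bound $0\leq\floor*{\frac{r}{M}}\leq N-1$. The twiddle-factor sign you flagged is in fact an inconsistency in the paper itself: carrying the negative exponent of $\omega$ in (\ref{eqn:DiagonalvaluesofDmatrix}) through the substitution gives $\omega^{-m(r \bmod M)}$, whereas (\ref{eqn:transmitclosedformdbar}) prints $\omega^{+m(r \bmod M)}$, and the paper's appendix makes the same silent sign change that your derivation would.
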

\begin{proof}
See Appendix~\ref{app:Th1}.
\end{proof}
\begin{corollary}
Using Theorem~\ref{Th:modmat} and (\ref{eqn:transmittedsignalmatrixform}), GFDM transmitted signal, $\bx$ can be given as,
\be \label{corr:tx}
\bx= \permut^{\rm T} \diagdft{M} \bar{\bD} \diagdft{M}^{\rm H} \permut \diagdft{N} \bd.
\ee
\end{corollary}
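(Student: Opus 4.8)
The plan is to derive the claimed expression by a single direct substitution, since all of the analytic work has already been carried out in Theorem~\ref{Th:modmat}; the corollary merely propagates that factorization through the definition of the transmitted vector. Concretely, I would begin from equation (\ref{eqn:transmittedsignalmatrixform}), which expresses the GFDM transmitted signal as $\bx = \bA \bd$, where $\bd$ is the $MN$-length data vector and $\bA$ is the $MN \times MN$ modulation matrix.

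Next I would invoke Theorem~\ref{Th:modmat}, which asserts the factorization $\bA = \permut^{\rm T} \diagdft{M} \bar{\bD} \diagdft{M}^{\rm H} \permut \diagdft{N}$, where $\bar{\bD}$ is the diagonal matrix whose $r^{\rm th}$ entry $\bar{\lambda}(r)$ is given in closed form by (\ref{eqn:transmitclosedformdbar}). I would then substitute this expression for $\bA$ directly into $\bx = \bA \bd$, which immediately yields
\begin{equation}
\bx = \permut^{\rm T} \diagdft{M} \bar{\bD} \diagdft{M}^{\rm H} \permut \diagdft{N} \bd,
\end{equation}
exactly the statement (\ref{corr:tx}) to be proved.

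Since the result follows purely by substituting (\ref{eqn:lowcomplextransmit}) into (\ref{eqn:transmittedsignalmatrixform}), there is no genuine obstacle to overcome here; the substantive content resides entirely in Theorem~\ref{Th:modmat}. The only point worth verifying is dimensional consistency: each factor $\permut^{\rm T}$, $\diagdft{M}$, $\bar{\bD}$, $\diagdft{M}^{\rm H}$, $\permut$, and $\diagdft{N}$ is of size $MN \times MN$, so their product is a well-defined $MN \times MN$ operator that maps the $MN$-length data vector $\bd$ to the $MN$-length transmitted vector $\bx$. Having confirmed this compatibility, the corollary is established, and I would conclude the proof at this point.
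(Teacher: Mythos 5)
Your proof is correct and matches the paper's intent exactly: the corollary is stated as an immediate consequence of Theorem~\ref{Th:modmat} and (\ref{eqn:transmittedsignalmatrixform}), obtained by substituting the factorization $\bA = \permut^{\rm T} \diagdft{M} \bar{\bD} \diagdft{M}^{\rm H} \permut \diagdft{N}$ into $\bx = \bA\bd$, which is precisely your argument. The dimensional check you add is harmless but unnecessary, since all factors are $MN \times MN$ by construction.
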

\begin{lemma} \label{th:permutation}
Let $\tempvec=[\vartheta(0) ~ \vartheta(1) \cdots \vartheta(MN-1)]^{\rm T}$ be a $MN$ length complex valued vector. The vector, $\tilde{\tempvec}=\permut \tempvec= [\tilde{\vartheta}(0) ~ \tilde{\vartheta}(1) \cdots \tilde{\vartheta}(MN-1)]^{\rm T}$. The $i^{\rm th}$ element of the vector $\tilde{\tempvec}$ can be given as,
\begin{equation}
\label{eqn:Pv}
\tilde{\vartheta}(i) = \vartheta((i \mod M)N+\floor*{\frac{i}{M}}), ~ 0\leq i  \leq MN-1.
\end{equation}
The vector, $\bar{\tempvec}=\permut^{\rm T} \tempvec= [\bar{\vartheta}(0) ~ \bar{\vartheta}(1) \cdots \bar{\vartheta}(MN-1)]^{\rm T}$. The $i^{\rm th}$ element of the vector $\bar{\tempvec}$ can be given as,
\begin{equation}
\label{eqn:Ptv}
\bar{\vartheta}(i) = \vartheta((i \mod N)M+\floor*{\frac{i}{N}}), ~ 0\leq i \leq MN-1.
\end{equation}
\end{lemma}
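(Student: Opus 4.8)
The plan is to establish both identities by directly evaluating the matrix--vector products, using only the definition of $\permut$ from Lemma~\ref{th:factorization G} together with the fact that a permutation matrix carries exactly one nonzero entry, equal to one, in each row and in each column. No spectral or structural facts are needed; everything reduces to tracking a single index through a quotient--remainder decomposition.

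First I would prove (\ref{eqn:Pv}) for $\tilde{\tempvec}=\permut\tempvec$. By the definition of matrix--vector multiplication, $\tilde{\vartheta}(i)=\sum_{q=0}^{MN-1}p_{i,q}\,\vartheta(q)$, and since row $i$ of $\permut$ has its unique one at the column index prescribed by (\ref{eqn:permut}), the sum collapses to that single surviving term. Writing $\phi(i)=(i \mod M)N+\floor*{\tfrac{i}{M}}$ for this column index gives $\tilde{\vartheta}(i)=\vartheta(\phi(i))$, which is exactly (\ref{eqn:Pv}). The one point to verify here is that $\phi$ is a genuine bijection of $\{0,\ldots,MN-1\}$, so that $\permut$ is a well-defined permutation matrix; this follows from the unique decomposition $i=\floor*{\tfrac{i}{M}}\,M+(i \mod M)$ with $0\le\floor*{\tfrac{i}{M}}\le N-1$ and $0\le (i \mod M)\le M-1$, under which $(\,\floor*{\tfrac{i}{M}},\,i \mod M\,)\mapsto \phi(i)$ runs bijectively over all of $\{0,\ldots,MN-1\}$.

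The substantive part is (\ref{eqn:Ptv}) for $\bar{\tempvec}=\permut^{\rm T}\tempvec$. Because $\permut$ is a permutation matrix, $\permut^{\rm T}=\permut^{-1}$, so the $i$-th entry selects $\vartheta(q)$ at the unique index $q$ with $\phi(q)=i$; that is, $\bar{\vartheta}(i)=\vartheta(\phi^{-1}(i))$. The key step is therefore to exhibit $\phi^{-1}$ explicitly and show it coincides with the map $\psi:i\mapsto (i \mod N)M+\floor*{\tfrac{i}{N}}$ of (\ref{eqn:Ptv}). I would do this by composition: fix $i$ and set $a=\floor*{\tfrac{i}{M}}\in\{0,\ldots,N-1\}$ and $b=(i \mod M)\in\{0,\ldots,M-1\}$, so that $\phi(i)=bN+a$. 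Applying $\psi$ to the index $j=bN+a$ and using $0\le a\le N-1$ gives $j \mod N=a$ and $\floor*{\tfrac{j}{N}}=b$, whence $\psi(\phi(i))=aM+b=\floor*{\tfrac{i}{M}}\,M+(i \mod M)=i$. Thus $\psi=\phi^{-1}$, and (\ref{eqn:Ptv}) follows.

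I expect the main obstacle to be purely bookkeeping: keeping the two quotient--remainder splittings straight (stride $M$ for $\phi$ versus stride $N$ for $\psi$) and confirming at each step that the quotient and remainder stay within their declared ranges, so that the floor and modulo operations genuinely invert one another. Once the range bounds $0\le a\le N-1$ and $0\le b\le M-1$ are in hand, the two formulas fall out mechanically, and no deeper argument beyond the bijectivity of $\phi$ is required.
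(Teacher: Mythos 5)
Your proof is correct and takes essentially the same route as the paper's: the paper's entire proof is the one-line observation that (\ref{eqn:Pv}) and (\ref{eqn:Ptv}) follow by applying the definition (\ref{eqn:permut}) of $\permut$ to the products $\permut\tempvec$ and $\permut^{\rm T}\tempvec$. You simply supply the bookkeeping the paper leaves implicit --- the collapse of the matrix--vector sum onto the single nonzero entry per row, the bijectivity of the stride-$M$ index map, and the check that the stride-$N$ map is its inverse.
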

\begin{proof}
Using (\ref{eqn:permut}) in $\tilde{\tempvec}=\permut \tempvec$ and $\bar{\tempvec}=\permut^{\rm T} \tempvec$, (\ref{eqn:Pv}) and (\ref{eqn:Ptv}) are obtained.
\end{proof}

\subsection{Low Complexity Transmitter Implementation} \label{sec:lowtx}
The low complexity transmitter can be obtained using Corollary~1 and Lemma~\ref{th:permutation}.  Fig.~\ref{fig:TX} presents low complexity transmitter implementation. 

The vector $\bm e= \diagdft{N} \bd$, can be obtained by $M$, $N$ point IFFT. The vector $\tilde{\bm e }=\permut \bm e$ can be obtained by shuffling the vector $\bm e$ using (\ref{eqn:Pv}).  The vector $\bm c= \diagdft{M}^{\rm H} \tilde{\bm e}$ can be obtained using $N$, $M$-point FFT's. Using (\ref{eqn:transmitclosedformdbar}), the matrix $\bar{\bm D}$, can be precomputed at the transmitter. The $MN$ length vector, $\bm z= \bar{\bm D} \bm c$, can be obtained by $MN$-point complex multiplication. The $MN$ length vector, $\tilde{\bx}= \diagdft{M} \bm z$ can be implemented using $N$, $M$-point IFFT. Finally, the transmitted signal, $\bx=\permut^{\rm T}\tilde{\bx}$ can be obtained by shuffling $\tilde{\bx}$ according to (\ref{eqn:Ptv}).
\begin{figure}[h]
\centering
 
  \includegraphics[width=\linewidth]{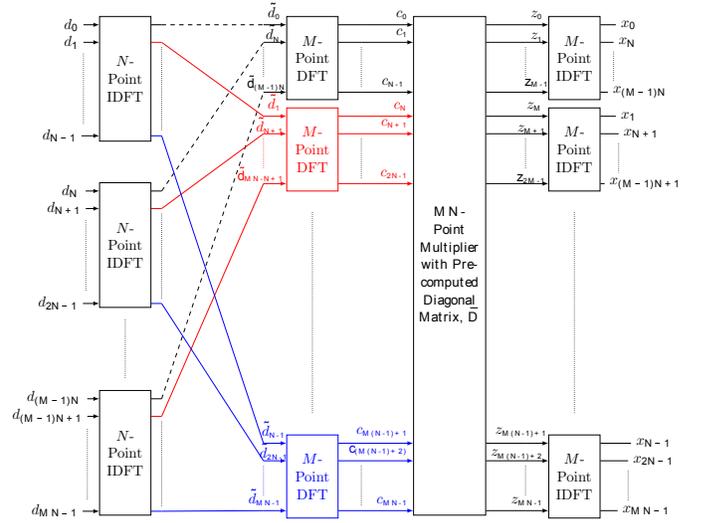}
  \caption{Low Complexity Implementation of GFDM Transmitter.}
  \label{fig:TX}
\end{figure}

  \begin{figure*} [t]
\centering
  \includegraphics[width=\linewidth]{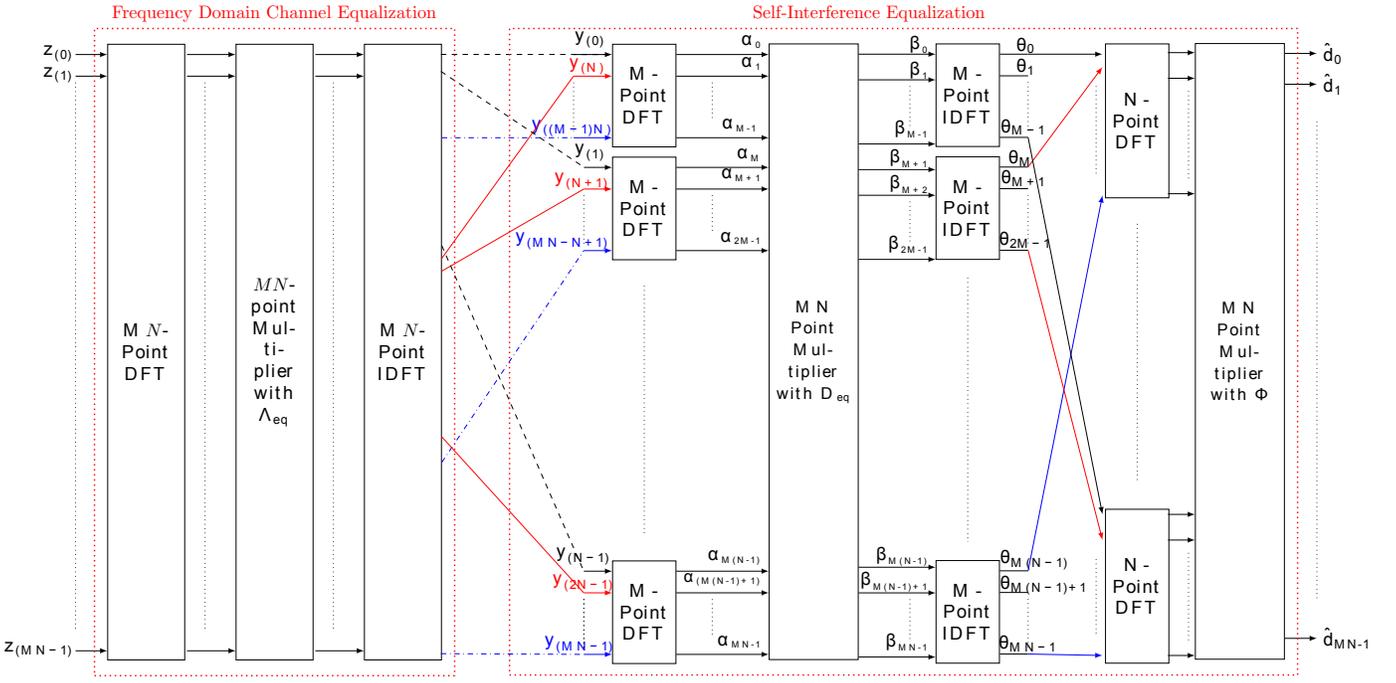}
  \caption{Low Complexity Implementation of GFDM Receiver in Multipath Fading Channel.}
  \label{fig:Rx}
\end{figure*}
\section{Low Complexity GFDM Reciever} \label{sec:reciever}
 In this section, we present the low complexity leaner GFDM receivers i.e. (1) MF (2) ZF and (3) Biased MMSE and (4) Unbiased MMSE. In the following subsections, we will show that using the factorization of $\bA$ given in (\ref{eqn:Afac}), receivers will low computational load can be designed. Additionally, our proposed receivers have unified implementation and will lead to similar computational complexity.   Our design does not make any assumption related to GFDM parameters, hence can achieve optimum performance. 
\subsection{Low Complexity GFDM Receiver Design}   
Receiver in AWGN channel is self-interference equalization. For multipath fading channel, channel equalization is followed by self-interference equalization. 
Theorem~\ref{Th:Aeq} relates to unified low complexity GFDM linear receivers.    
 \begin{theorem} \label{Th:Aeq}
GFDM equalization matrix $\bA_{eq}$ can be written in a unified manner as,
\begin{equation} \label{eqn:unifiedequalizationmatrix}
\bA_{eq} = \matrixmmsebias{}  \diagdft{N}^{\rm H} \permut^{\rm T} \diagdft{M} \bD_{eq} \diagdft{M}^{\rm H} \permut,
\end{equation}
where, $\bD_{eq}$ is a diagonal $MN$-order matrix, which can be given as,
\begin{equation} \label{eqn:Deqmatrix}
\bD_{eq} = \begin{cases}
\diagonalgfdm^{\rm H} ~ \text{for MF} \\
\diagonalgfdm^{-1} ~ \text{for ZF} \\
[\mI +abs\{\diagonalgfdm\}^2]^{-1} \diagonalgfdm^{\rm H} \\
\text{for unbiased and biased MMSE}
\end{cases}
\end{equation}
 and $\matrixmmsebias{}= \matrixmmsebias{gfdm}^{-1}$ for unbiased MMSE and $\matrixmmsebias{}=\bI_{MN}$ for other equalizers. Further, $\matrixmmsebias{gfdm}$ can be given as,
 \begin{equation} \label{eqn:th:gfdmmmsebiasmatrix}
 \matrixmmsebias{gfdm} = \frac{1}{MN} \sum_{r=0}^{MN-1}{\frac{|{\lambda}_r|^2}{|{\lambda}_r|^2+\snri}} \bI_{MN}.
 \end{equation}
 \end{theorem}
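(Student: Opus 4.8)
The plan is to substitute the factorization of the modulation matrix from Theorem~\ref{Th:modmat}, namely $\bA = \permut^{\rm T} \diagdft{M} \diagonalgfdm \diagdft{M}^{\rm H} \permut \diagdft{N}$, into each of the four branches of (\ref{eqn:gfdmeualizermatrix}) and to collapse the resulting products using the fact that $\permut$, $\diagdft{M}$ and $\diagdft{N}$ are all unitary, so that $\permut^{\rm T}\permut=\permut\permut^{\rm T}=\bI_{MN}$, $\diagdft{M}^{\rm H}\diagdft{M}=\diagdft{M}\diagdft{M}^{\rm H}=\bI_{MN}$ and $\diagdft{N}^{\rm H}\diagdft{N}=\bI_{MN}$. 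For the MF branch I would simply take the Hermitian of the factorization; using $\permut^{\rm H}=\permut^{\rm T}$ and $(\permut^{\rm T})^{\rm H}=\permut$ this gives $\bA^{\rm H}=\diagdft{N}^{\rm H}\permut^{\rm T}\diagdft{M}\,\diagonalgfdm^{\rm H}\,\diagdft{M}^{\rm H}\permut$, which is exactly (\ref{eqn:unifiedequalizationmatrix}) with $\matrixmmsebias{}=\bI_{MN}$ and $\bD_{eq}=\diagonalgfdm^{\rm H}$. For the ZF branch I would invert the factorization factor by factor; each unitary factor contributes its conjugate transpose and $\diagonalgfdm$ contributes $\diagonalgfdm^{-1}$, yielding the same sandwich with $\bD_{eq}=\diagonalgfdm^{-1}$.

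Next I would treat the two MMSE branches. First I compute $\AHA$ from the factorization; the inner unitary factors cancel in pairs and leave $\AHA=\diagdft{N}^{\rm H}\permut^{\rm T}\diagdft{M}\,abs\{\diagonalgfdm\}^2\,\diagdft{M}^{\rm H}\permut\diagdft{N}$. The key trick is to write the regularization term as $\mI=\diagdft{N}^{\rm H}\permut^{\rm T}\diagdft{M}\,(\mI)\,\diagdft{M}^{\rm H}\permut\diagdft{N}$, which is legitimate because the surrounding unitary factors multiply to the identity; then $\mI+\AHA$ becomes a single sandwich whose middle factor is the diagonal matrix $\mI+abs\{\diagonalgfdm\}^2$. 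Inverting this sandwich (again each unitary factor returns its conjugate transpose) and right-multiplying by $\bA^{\rm H}$, the interface factors $\diagdft{N}\diagdft{N}^{\rm H}$, $\permut\permut^{\rm T}$ and $\diagdft{M}^{\rm H}\diagdft{M}$ all collapse to the identity, leaving $\bD_{eq}=[\mI+abs\{\diagonalgfdm\}^2]^{-1}\diagonalgfdm^{\rm H}$, which is the stated biased-MMSE form with $\matrixmmsebias{}=\bI_{MN}$. The unbiased case then only prepends the left factor $\matrixmmsebias{gfdm}^{-1}$, giving $\matrixmmsebias{}=\matrixmmsebias{gfdm}^{-1}$.

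The main obstacle is the last claim, that the bias matrix $\matrixmmsebias{gfdm}=diag\{[\mI+\AHA]^{-1}\AHA\}$ of (\ref{eqn:gfdmmatrixmmsebias}) collapses to the scalar multiple of the identity in (\ref{eqn:th:gfdmmmsebiasmatrix}). Using the computations above, $[\mI+\AHA]^{-1}\AHA=\bU\,\bm M\,\bU^{\rm H}$, where $\bU=\diagdft{N}^{\rm H}\permut^{\rm T}\diagdft{M}$ is unitary and $\bm M$ is the diagonal matrix with entries $|\bar\lambda(r)|^2/(|\bar\lambda(r)|^2+\snri)$. Each diagonal entry of this product is $\sum_{r}|U_{i,r}|^2\,\bm M_{r,r}$, so I must show that every entry of $\bU$ has the same modulus $1/\sqrt{MN}$. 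This is where the structure of the perfect-shuffle permutation $\permut$ is essential: sandwiching $\permut^{\rm T}$ between the block-diagonal $N$-point and $M$-point normalized DFT matrices reproduces, up to a permutation of rows and columns and unit-modulus phases, the full $MN$-point unitary DFT (the Cooley--Tukey factorization), all of whose entries have modulus $1/\sqrt{MN}$. Granting this, every diagonal entry equals $\frac{1}{MN}\sum_{r=0}^{MN-1}|\bar\lambda(r)|^2/(|\bar\lambda(r)|^2+\snri)$, independent of $i$; finally, since by Lemma~\ref{th:factorization G} the sequence $\{\bar\lambda(r)\}$ is a permutation of $\{\lambda(r)\}$, the sum may be rewritten over $\{\lambda_r\}$, which yields (\ref{eqn:th:gfdmmmsebiasmatrix}). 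I would establish the equal-modulus property either by the explicit index bookkeeping supplied by Lemma~\ref{th:permutation} or by invoking the known perfect-shuffle DFT factorization directly.
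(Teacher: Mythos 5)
Your overall route coincides with the paper's proof in Appendix~C: the MF branch by taking the Hermitian of the Theorem~\ref{Th:modmat} factorization, the ZF branch by inverting it factor by factor, and the MMSE branch by pulling the regularized inverse through the unitary factors (the paper phrases your ``write $\mI$ as a sandwich'' step as $[\mI+\unitarymatrix_2^{\rm H}\,abs\{\bar{\bD}\}^2\,\unitarymatrix_2]^{-1}\unitarymatrix_2^{\rm H}=\unitarymatrix_2^{\rm H}[\mI+abs\{\bar{\bD}\}^2]^{-1}$ with $\unitarymatrix_2=\diagdft{M}^{\rm H}\permut\diagdft{N}$), and finally the bias matrix by showing that the unitary matrix conjugating the diagonal matrix has constant-modulus entries, so that each diagonal entry of the conjugated product is the arithmetic mean. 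All of that is sound and is exactly the paper's argument.

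The one genuine flaw is in your primary justification of the constant-modulus step. You assert that $\bU=\diagdft{N}^{\rm H}\permut^{\rm T}\diagdft{M}$ is, up to row/column permutations and unit-modulus phases, the full $MN$-point DFT matrix ``by the Cooley--Tukey factorization.'' That identification is false: Cooley--Tukey reads $\dftmat_{MN}=(\dftmat_M\kron\bI_N)\,\bm T\,(\bI_M\kron\dftmat_N)\,\permut'$ with a diagonal \emph{twiddle-factor} matrix $\bm T$ between the two stages, and the twiddles cannot be absorbed into one-sided phases or permutations. Dropping them, the shuffle identity $\permut^{\rm T}(\bI_N\kron\dftmat_M)=(\dftmat_M\kron\bI_N)\permut^{\rm T}$ shows that what you actually have is $\bU=(\dftmat_M\kron\dftmat_N^{\rm H})\permut^{\rm T}$, a Kronecker product of DFT matrices times a column permutation; for $M$, $N$ sharing a common factor this is \emph{not} permutation/phase-equivalent to $\dftmat_{MN}$ (e.g.\ $M=N=2$: $\dftmat_2\kron\dftmat_2$ is the real Hadamard matrix, which is inequivalent to $\dftmat_4$, as the Haagerup invariants $h_{ij}h_{kl}\bar{h}_{il}\bar{h}_{kj}$ show). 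Fortunately the property you actually need --- every entry of $\bU$ has modulus $1/\sqrt{MN}$ --- follows immediately from this Kronecker form, and that is precisely how the paper closes the argument: it reverts to the unshuffled factorization (\ref{eqn:Modmatrix}) to write $\AHA=\unitarymatrix_3\,abs\{\bD\}^2\,\unitarymatrix_3^{\rm H}$ with $\unitarymatrix_3=\dftmat_M\kron\dftmat_N^{\rm H}$ displayed explicitly as in (\ref{eqn:unitarymarix3}). Your fallback of explicit index bookkeeping via Lemma~\ref{th:permutation} would also deliver it. So replace the Cooley--Tukey appeal by the Kronecker identity; with that repair (and your correct observation that $\{\bar{\lambda}(r)\}$ is a permutation of $\{\lambda(r)\}$, so the average may be taken over the $\lambda_r$), your proof is complete and matches the paper's.
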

 \begin{proof}
This theorem can be proved using the factorization of $\bA$ in (\ref{eqn:Afac}), properties of Kronecker product and properties of unitary matrices. For complete proof,  see Appendix~\ref{app:proofTh2}.
 \end{proof}
 \begin{corollary} \label{corr:twostageestdata}
 Using Theorom~\ref{Th:Aeq} and (\ref{eqn: rec}-\ref{eqn:gfdmmatrixmmsebias}), the estimated data, $\hat{\bd}$, can be given as,
 \begin{equation} \label{eqn:lowcomplextwostageestimateddata}
 \hat{\bd} =\begin{cases} \matrixmmsebias{}  {\diagdft{N}^{\rm H}} \permut^{\rm T} \diagdft{M} \bD_{eq} \diagdft{M}^{\rm H} \permut \z ~\text{for AWGN Channel}\\
\matrixmmsebias{}  {\diagdft{N}^{\rm H}} \permut^{\rm T} \diagdft{M} \bD_{eq} \diagdft{M}^{\rm H} \permut  \dftmat_{MN} \channelfreqmatrix_{eq} \dftmat_{MN}^{\rm H} \z \\~ \text{for Multipath Fading Channel} \end{cases}
 \end{equation}
 \end{corollary}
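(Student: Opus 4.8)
The plan is to derive $\hat{\bd}$ for each channel scenario by a direct substitution: start from the estimated-data relation $\hat{\bd}=\bA_{eq}\by$ of (\ref{eqn:gfdmequalizedsignal}), insert the unified equalizer factorization $\bA_{eq}=\matrixmmsebias{}\diagdft{N}^{\rm H}\permut^{\rm T}\diagdft{M}\bD_{eq}\diagdft{M}^{\rm H}\permut$ established in Theorem~\ref{Th:Aeq} (equation (\ref{eqn:unifiedequalizationmatrix})), and then recognise that the only quantity that differs between the two branches of (\ref{eqn:lowcomplextwostageestimateddata}) is the channel-equalized vector $\by$ on which the self-interference equalizer acts. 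Thus the whole argument reduces to identifying $\by$ correctly in each case and composing it with the factored $\bA_{eq}$.

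For the AWGN channel I would invoke the receiver structure stated at the top of Section~\ref{sec:reciever}, namely that the AWGN receiver consists of a single self-interference-equalization stage with no preceding frequency-domain channel equalizer. Consequently the equalizer acts directly on the received vector, so that $\by=\z$ and $\hat{\bd}=\bA_{eq}\z$. Substituting the unified form of $\bA_{eq}$ then gives $\hat{\bd}=\matrixmmsebias{}\diagdft{N}^{\rm H}\permut^{\rm T}\diagdft{M}\bD_{eq}\diagdft{M}^{\rm H}\permut\,\z$, which is exactly the first branch. (If one prefers, the same conclusion follows by setting $\bH=\bI_{MN}$ in (\ref{eqn: rec}), whence $\channelfreqmatrix=\bI_{MN}$, the ZF FDE gives $\channelfreqmatrix_{eq}=\bI_{MN}$, and the unitarity $\dftmat_{MN}\dftmat_{MN}^{\rm H}=\bI_{MN}$ collapses (\ref{eqn:systemmodel:channelequalized}) to $\by=\z$.)

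For the multipath fading channel I would use the explicit two-stage structure: the channel-equalized vector supplied by the FDE stage is $\by=\dftmat_{MN}\channelfreqmatrix_{eq}\dftmat_{MN}^{\rm H}\z$ from (\ref{eqn:systemmodel:channelequalized}), and feeding this into $\hat{\bd}=\bA_{eq}\by$ with the factored $\bA_{eq}$ produces $\hat{\bd}=\matrixmmsebias{}\diagdft{N}^{\rm H}\permut^{\rm T}\diagdft{M}\bD_{eq}\diagdft{M}^{\rm H}\permut\,\dftmat_{MN}\channelfreqmatrix_{eq}\dftmat_{MN}^{\rm H}\z$, the second branch. Since both derivations are pure substitutions, there is no genuine analytic obstacle; the only points requiring care are the AWGN justification — making explicit that the channel-equalization stage is absent (or reduces to the identity) so that $\by=\z$ rather than carrying an FDE factor — and correctly matching the diagonal bias factor $\matrixmmsebias{}$ and the core diagonal matrix $\bD_{eq}$ to the chosen equalizer (MF, ZF, biased or unbiased MMSE) as catalogued in Theorem~\ref{Th:Aeq}.
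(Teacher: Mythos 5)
Your proof is correct and follows exactly the route the paper intends: the corollary is an immediate substitution of the factored $\bA_{eq}$ from Theorem~\ref{Th:Aeq} into $\hat{\bd}=\bA_{eq}\by$ of (\ref{eqn:gfdmequalizedsignal}), with $\by=\z$ in the AWGN case and $\by=\dftmat_{MN}\channelfreqmatrix_{eq}\dftmat_{MN}^{\rm H}\z$ from (\ref{eqn:systemmodel:channelequalized}) in the multipath case, which is why the paper states the result without a separate proof. Your added care in justifying $\by=\z$ for AWGN (via the absent FDE stage, or equivalently $\bH=\bI_{MN}$ collapsing the FDE to the identity) is a sound refinement of the same argument, not a different approach.
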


\subsection{Low Complexity Receiver Implementation} \label{sec:lowrx}
 The low complexity structure of GFDM self-interference cancellation can be obtained by using Corollary~\ref{corr:twostageestdata} and Lemma~\ref{th:permutation}. Low complexity two stage receiver implementation can be understood in the light of Fig.~\ref{fig:Rx}.
 
 \subsubsection{Channel Equalization}
 To implement $\by^1=   \channelfreqmatrix_{eq} \dftmat_{MN}^{\rm H}\z$, $MN$-point FFT of $\z$ is  multiplied with  $ \channelfreqmatrix_{eq}$. Finally, we take $MN$-point IFFT of $\by^1$ to implement $\by=\dftmat_{MN}\by^1$.
\subsubsection{Self-interference Equalization} 
  The vector $\tilde{\by}=\permut \by$, can be obtained by shuffling the $\by$ vector using (\ref{eqn:Pv}).  The $MN \times 1$ vector $\bsy \alpha= \diagdft{M}^{\rm H} \tilde{\by}$ can be implemented by using $N$, M-point IFFT's. The vector $\bsy {\alpha}$ is then multiplied to the diagonal matrix $\bD_{eq}$ to obtain $\bsy {\beta}$. The vector $\bsy \theta=\diagdft{M}\bsy{\beta}$ can be implemented using $N$, $M$-point FFTs. The vector, $\tilde{\bsy \theta}=\permut^{\rm T} \bsy{\theta}$, can be implemented by shuffling the $\bsy{\theta}$ vector using (\ref{eqn:Ptv}). Now, the vector, $\bar{\bd}=\diagdft{N}\tilde{\bsy{\theta}}$ can be implemented using $M$, $N$-point FFTs. Finally, $\hat{\bd}=\matrixmmsebias{}\bar{\bd}$ can be obtained by using $MN$-point multiplier.

\section{Complexity Computation}\label{sec:complexitycomputation}
In this section, we present the computational complexity of GFDM transmitter and receivers proposed in this work. We calculate the complexity in terms of the total number of real multiplications and real additions. $FFT_{(.)}$ and $IFFT_{(.)}$ denote $(.)$-point FFT and IFFT respectively.

For evaluation of computational complexity, we compute the flops required where one flop indicates one real multiplication or one real addition. Flops needed for one complex multiplication, division, addition, conjugate and modulus square are  6, 6, 2, 2 and 3 respectively. 

We consider value of $N$ and $M$ to be a power of two. To enable this we use modified raised cosine pulse derived for even $M$ values in \cite{nimr_optimal_2017}. In the light of Sec.~\ref{sec:lowtx} and Sec.~\ref{sec:lowrx}, it is clear that our low complexity transceiver is implemented using $N$, $M$ and $MN$ point FFT and IFFT algorithms. The choice of FFT/IFFT algorithm is a critical aspect for complexity computation. Let us consider, the flops required to compute $N$, $M$ and $MN$ point FFT/IFFT are $\flopdft{N}$, $\flopdft{M}$ and $\flopdft{MN}$.  It will be shown in further subsections that complexity of our proposed transceiver is given in terms of $N\times \flopdft{M}$ and $M\times\flopdft{N}$. Value of $N$ and $M$ can be small as well as large, however it is unlikely that both $N$ and $M$ are small simultaneously. Hence, flops required for FFT/IFFT of small inputs are also important for transceiver implementation.  In Appendix~\ref{sec:app:fft}, it is shown that Winograd's FFT \cite{winograd_computing_1978}  requires lesser flops than radix-2 and split-radix FFT \cite{duhamel_fast_1990} when $N,~M$ is small ($\leq$ 19) . Hence, when $N$ or $M$ is small we choose Winograd's small FFT algorithm. For input values of [2 4 8 16], Winograd's small FFT requires [4 12 34 92] flops. When $M,~N ~\geq 32$, we implement split-radix algorithm to implement $N$ or $M$ point FFT/IFFT. We consider that $MN$-point FFT/IFFT is also implemented using split-radix algorithm. Flops to compute $X$ point FFT/IFFT using split-radix algorithm is $4X\log_2X-6X+8$ \cite{duhamel_fast_1990}.

 
 Assumptions to design receiver in \cite{wei_fast_2016} are incompatible with ones in \cite{farhang_low_2015, matthe_precoded_2016,lin_orthogonality_2015,farhang-boroujeny_derivation_2015}. We have adhered to conditions in \cite{farhang_low_2015, matthe_precoded_2016,lin_orthogonality_2015,farhang-boroujeny_derivation_2015}. Because of contrary assumptions, comparison of our tranceiver with \cite{wei_fast_2016} is beyond the scope of our work. 
\subsection{Transmitter}
 As evident from Sec~\ref{sec:transmitter} and Fig.~\ref{fig:TX}, the transmitter can be implemented using $M$ numbers of $FFT_N$, $N$ numbers of $FFT_{M}$, $N$ numbers of $IFFT_M$ and  $MN$ complex divisions. Table~\ref{tab:complexity} presents the total number of complex multiplication needed to implement different transmitter structures.
 \begingroup
\setlength{\tabcolsep}{10pt} 
\renewcommand{\arraystretch}{1.5} 
 \begin{table}[h]
\caption{Computational Complexity of different GFDM Transmitter Implementations. $\flopdft{N}$ and $\flopdft{M}$ are flops required to compute N and M point FFT respectively.}
\centering
\label{tab:complexity}
\begin{tabular}{| >{\centering\arraybackslash}m{0.35\linewidth}| >{\centering\arraybackslash}m{0.45\linewidth}|}
\hline
\textbf{Structure} & \textbf{Number of Flops}\\ \hline \hline
Transmitter in \cite{michailow_generalized_2012}  & $M\times\flopdft{N}+2N\times \flopdft{M}+6MNL $, where $1<L\leq N$ \\  \hline 
 Transmitter in \cite{farhang-boroujeny_derivation_2015} & $M \times \flopdft{N}+4M^2N$ \\ \hline
 Transmitter in \cite{lin_orthogonality_2015,matthe_precoded_2016,farhang_low-complexity_2016} & $M \times \flopdft{N}+3M^2N+2(M-1)N$ \\ \hline
  Our Proposed Transmitter  & $M\times\flopdft{N}+2N\times \flopdft{M}+6MN $\\ \hline
  OFDM Transmitter & $M\times \flopdft{N}$\\ \hline
\end{tabular}

\end{table}

Complexities presented in Table~\ref{tab:complexity} are plotted in Fig.~\ref{fig:complexity:flops:N} for $N=16$ and $M\in[2,~1024]$ and Fig.~\ref{fig:complexity:flops:M} for $M=16$ and $N\in [2,~1024]$.  In this work, since we assume any arbitrary pulse shape. Hence, for fairness of comparison, we take $L=N$ for transmitter in \cite{michailow_generalized_2012}. It is observed that for $M<8$, transmitter in  \cite{farhang_low-complexity_2016} achieves the lowest complexity. For instace for $M=4$, our transmitter requires 50 percent more computational load than one in \cite{farhang_low-complexity_2016}.  For $M\geq8$, our proposed transmitter has the lowest computational complexity.  It is worth mentioning that complexity of transmitter in \cite{farhang_low-complexity_2016} is quadratic with $M$ whereas complexity of transmitter in \cite{michailow_generalized_2012} is quadratic with $N$. However, complexity of our proposed transmitter is log-linear with $M$ as well as $N$. Hence, when $M$ is large, our transmitter achieves significant complexity gain over transmitter in \cite{farhang_low-complexity_2016}. In the same way, when $N$ is large our transmitter acheives significant complexity gain over transmitter in \cite{michailow_generalized_2012}. For instance, when $M=1024$ and $N=16$, our transmitter is 100 times lesser complex than one in \cite{farhang_low-complexity_2016} and 3 times lesser complex than one in \cite{michailow_generalized_2012}. When $N=1024$ and $M=16$, our transmitter is 100 times lesser complex than one in \cite{michailow_generalized_2012} and 25 percent lesser complex than one in \cite{farhang_low-complexity_2016}. When compared with OFDM, our proposed transmitter has 2 to 10 times higher computational load.  Comparative complexity of GFDM transmitter with OFDM transmitter increases with $M$ and decreases with $N$. For instance, when $N=1024$ and and $M=16$, our transmitter has two higher complexity than OFDM. Whereas, when $M=1024$ and $N=16$, our transmitter is 6 times more complex than OFDM. It can be concluded that our proposed transmitter provides low computational load for flexible GFDM transmitter which may take arbitrary values of $M$, $N$ and arbitrary pulse shape.

\begin{figure}[h]
\begin{subfigure}{0.49\textwidth}
\includegraphics[width=\linewidth]{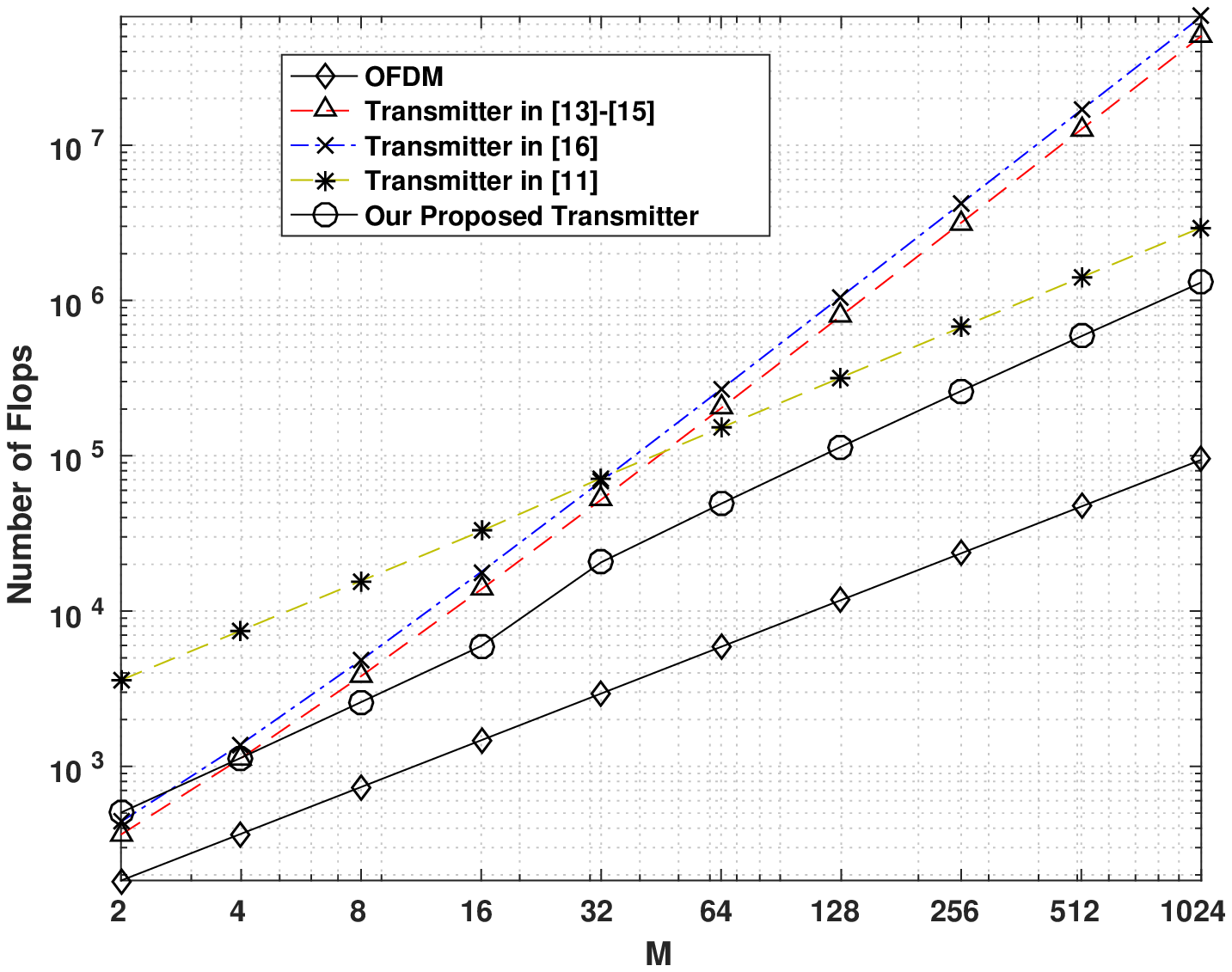}
\caption{ $N=16$, $M\in [2 ,~1024]$  }
\label{fig:complexity:flops:N}
\end{subfigure}
~
\begin{subfigure}{0.49\textwidth}
\includegraphics[width=\linewidth]{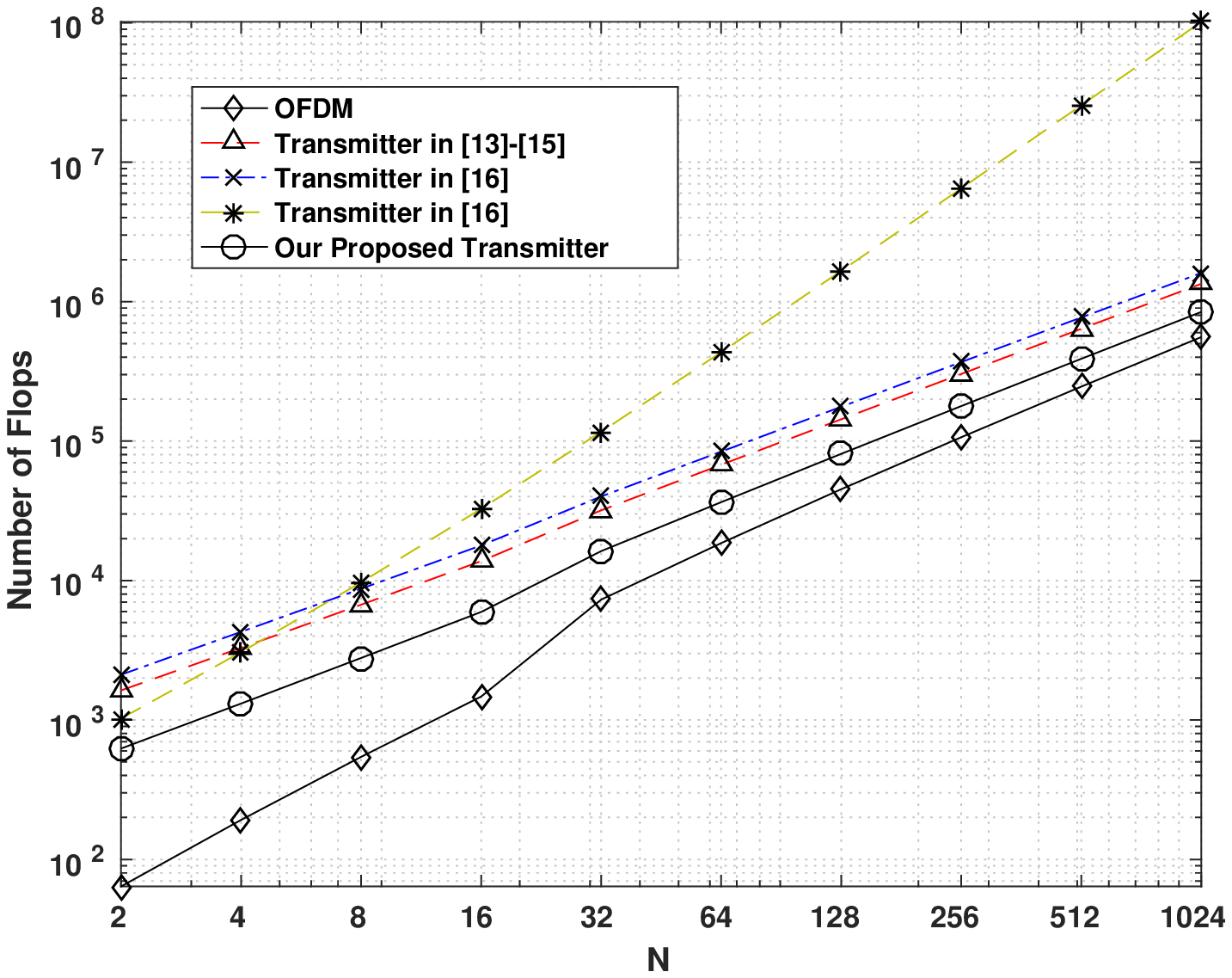}
\caption{$M=16$, $N\in [2,~ 1024]$ }
\label{fig:complexity:flops:M}
\end{subfigure}
\caption{Computational Complexity of different Transmitters.}
\label{fig:complexitytx}
\end{figure}

\subsection{Receiver} \label{sec:complexity:rx}
In this section we discuss the computation complexity of our proposed receivers for AWGN as well as multipath fading channel. 
\subsubsection{AWGN Channel} \label{sec:complexity:rx:awgn}
For AWGN channel, channel equalization is not needed. So, we will discuss the computational complexity of self-interference equalizer in this section.
As evident from Sec~\ref{sec:reciever} and Fig.~\ref{fig:Rx}, the receiver can be implemented using $M$ numbers of $FFT_N$, $N$ numbers of $FFT_M$, $N$ numbers of $IFFT_M$ and $MN$ complex multiplications. The $\bD_{eq}$ matrix can be precomputed for ZF and MF receiver. But, for MMSE receiver, $\bD_{eq}$ can be computed using $MN$ real additions, $2MN$ real multiplications and $2MN$ real divisions when the matrix $\bar{\bD}^{\rm H}$ and $abs\{\bar{\bD}\}^2$ are precomputed.  For MF, ZF and biased-MMSE receiver multiplication with $\matrixmmsebias{gfdm}$ is trivial and does not require any flops to implement. For Biased-MMSE receiver, the $\matrixmmsebias{gfdm}^{-1}$ can be computed using $2MN-1$ real additions, $MN+1$ real divisions and $MN$ modulus square operation.
 Computational complexity of different receivers in AWGN channel is given in Table~\ref{tab:complexity:awgn}. 
\begingroup
\setlength{\tabcolsep}{10pt} 
\renewcommand{\arraystretch}{1.5} 
\begin{table}[h]
\caption{Computational Complexity of Different Receivers in AWGN Channel. $\flopdft{MN}$, $\flopdft{N}$ and $\flopdft{M}$ are flops required to compute NM, N and M point FFT respectively.}
\label{tab:complexity:awgn}
\centering
\begin{tabular}{| >{\centering\arraybackslash}m{0.35\linewidth}| >{\centering\arraybackslash}m{0.5\linewidth}|}
\hline
\textbf{Structure} & \textbf{Number of Flops}\\ \hline\hline \vspace{0.1cm}
OFDM  & $M\times \flopdft{N}$ \\ \hline

 ZF/MF Receiver in \cite{farhang_low-complexity_2016,matthe_precoded_2016} & $M\times\flopdft{N}+3M^2N+2(M-1)N$ \\ \hline \vspace{0.1cm}
 ZF/MF Receiver in \cite{michailow_generalized_2014} & $2\times \flopdft{MN}+2N\times\flopdft{M}+6MNL$ , where $1<L\leq N$  \\ \hline \vspace{0.1cm}
 MMSE Receiver in \cite{farhang_low-complexity_2016,matthe_precoded_2016} & $M\times\flopdft{N}+12M^2N+9MN$ \\ \hline \vspace{0.1cm}
SIC Receiver in \cite{gaspar_low_2013} & $2\times\flopdft{MN}+2N \times \flopdft{M}+6LMN+I(4N\times \flopdft{M}+6MN)$ \\ \hline
  Proposed ZF/MF Receiver & $M\times \flopdft{M}+2N\times\flopdft{M}+6MN$  \\ \hline 
  Proposed biased MMSE Receiver & $M\times \flopdft{M}+2N\times\flopdft{M}+11MN$ \\ \hline
   Proposed unbiased MMSE Receiver & $M\times \flopdft{M}+2N\times\flopdft{M}+17MN$\\ \hline
%
\end{tabular}
\end{table}
\endgroup
\begin{figure}[h]
\begin{subfigure}{0.49\textwidth}
\includegraphics[width=\linewidth]{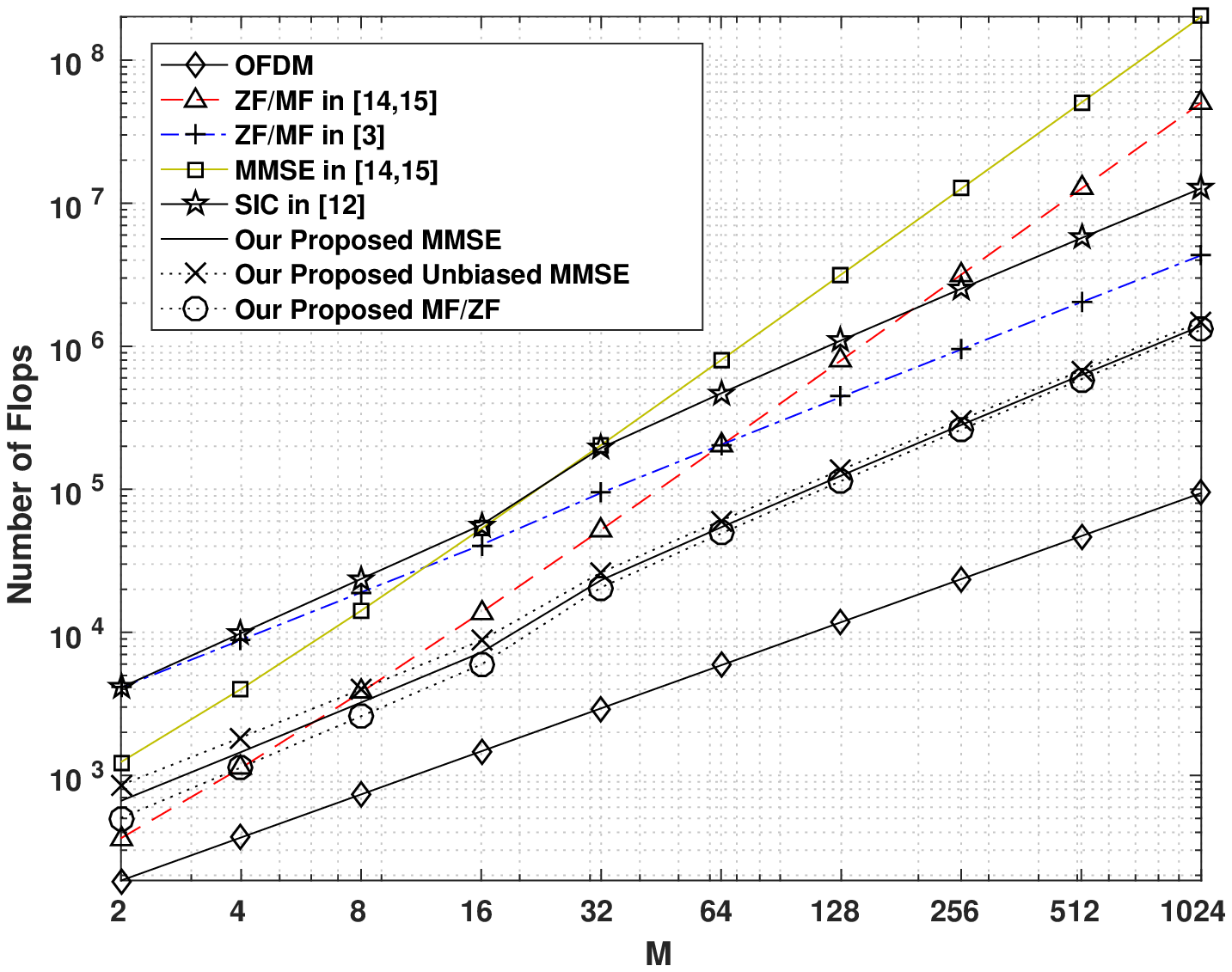}
\caption{ $N=16$, $M\in [2 ,~1024]$  }
\label{fig:complexity:awgn:flops:N}
\end{subfigure}
~
\begin{subfigure}{0.49\textwidth}
\includegraphics[width=\linewidth]{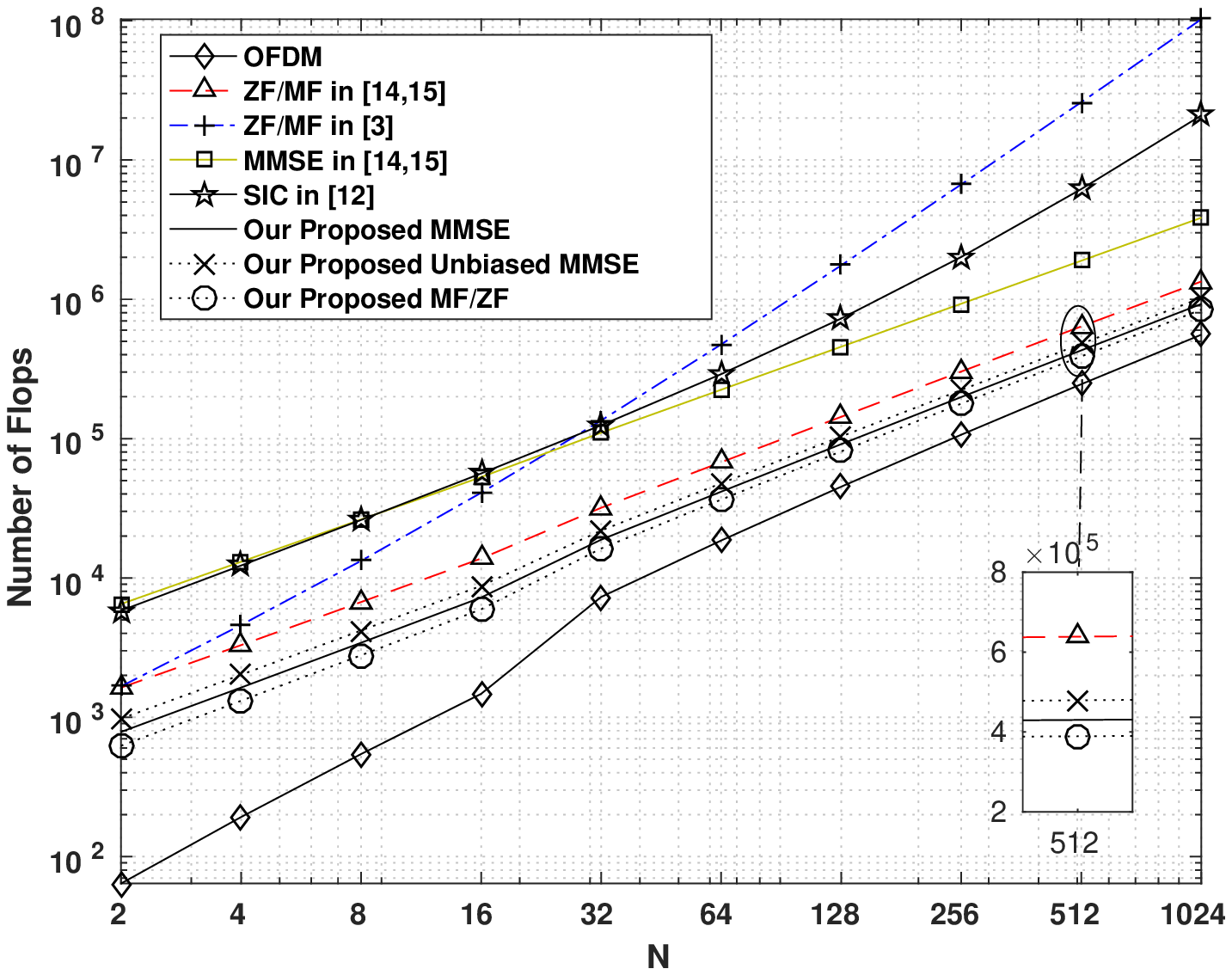}
\caption{$M=16$, $N\in [2,~ 1024]$ }
\label{fig:complexity:awgn:flops:M}
\end{subfigure}
\caption{Computational Complexity of different Receivers in AWGN Channel.}
\label{fig:complexityrxawgn}
\end{figure}

Complexities presented in Table~\ref{tab:complexity:awgn} are plotted in Fig.~\ref{fig:complexity:awgn:flops:N} for $N=16$ and $M\in[2,~1024]$ and Fig.~\ref{fig:complexity:awgn:flops:M} for $M=16$ and $N\in [2,~1024]$.  In this work, since we assume any arbitrary pulse shape. Hence, for fairness of comparison, we take $L=N$ for recievers in \cite{michailow_generalized_2014,matthe_generalized_2014,gaspar_low_2013}.
Using the results in \cite{gaspar_low_2013}, we consider $I=8$ for SIC receiver in \cite{gaspar_low_2013}. It is observed that Our proposed MF/ZF, biased MMSE and Unbiased MMSE have similar complexities since biased and unbiased MMSE requires only $5MN$ and $11MN$ additional flops over ZF. When $M$ is small, our proposed MF/ZF receiver has similar complexity to MF/ZF receiver in \cite{farhang_low-complexity_2016}. As $M$ increases our proposed MF/ZF receiver  attains significant complexity gain over ones in \cite{farhang_low-complexity_2016}. 
Our proposed MMSE receiver has the lowest computation load and achievs complexity reduction of 3 to 300 times in comparision to the ones in \cite{michailow_generalized_2014,matthe_generalized_2014,farhang_low-complexity_2016,matthe_precoded_2016}. 
 It is worth mentioning that complexity of MF/ZF/MMSE Receiver in \cite{farhang_low-complexity_2016} is quadratic with $M$, complexity of ZF receiver in \cite{michailow_generalized_2012} is quadratic with $N$ and complexity of MMSE receiver in \cite{michailow_generalized_2014,matthe_generalized_2014} is quadratic with $M$ as well as $N$. However, complexity of our proposed recievers are log-linear with $M$ as well as $N$. Hence, when $M$ is large, our MF/ZF/MMSE receiver achieves significant complexity gain over MF/ZF/MMSE receiver in \cite{farhang_low-complexity_2016} and MMSE receiver in \cite{michailow_generalized_2014,matthe_generalized_2014}. In the same way, when $N$ is large our MF/ZF/MMSE receiver acheives significant complexity gain over MF/ZF/MMSE receiver in \cite{michailow_generalized_2012} and MMSE receiver in \cite{michailow_generalized_2014,matthe_generalized_2014}. For instance, when $M=1024$ and $N=16$, our MF/ZF receiver and MMSE receiver are 100 and 300 times lesser complex than ones in \cite{farhang_low-complexity_2016}. When $N=1024$ and $M=16$, our MF/ZF receiver and MMSE are respectively 100  and 300 times lesser complex than ones in \cite{michailow_generalized_2012,michailow_generalized_2014,matthe_generalized_2014}. 
In comparison to SIC receiver in \cite{gaspar_low_2013}, our proposed receivers are 100 to 200 times lesser complex.   It can be concluded that our proposed receivers attain significant complexity reduction as compared to receivers in \cite{michailow_generalized_2014,matthe_generalized_2014,michailow_generalized_2012,
 farhang_low-complexity_2016,gaspar_low_2013}. 
 
  When compared with OFDM, our proposed MF/ZF/MMSE receiver has 2 to 10 times higher computational load.  Comparative complexity of our proposed GFDM MF/ZF/MMSE receiver with OFDM receiver increases with $M$ and decreases with $N$. For instance, when $N=1024$ and and $M=16$, our  receivers have two times higher complexity than OFDM receiver. Whereas, when $M=1024$ and $N=16$, our receivers are 6 times more complex than OFDM. 

\subsubsection{Multipath Fading Channel}
As discussed in Sec.~\ref{sec:reciever}, receiver for multipath fading channel requires channel equalization in addition to self-interference equalization. Hence to compute complexity of receiver in multipath fading channel, channel equalization complexity needs to be added to complexity required for self-interference equalization which is computed in Sec.~\ref{sec:complexity:rx:awgn}. As discussed in Sec.~\ref{sec:lowrx}, channel equalization can be implemented using one FFT$_{MN}$, one IFFT$_{MN}$ and one MN-point complex multiplication when $\channelfreqmatrix_{eq}$ is known. This requires $\flopdft{MN}+6MN$ flops. For ZF-FDE, $\channelfreqmatrix_{eq}$ does not need any flops. Using (\ref{eqn:systemmodel:channelequalized}), $\channelfreqmatrix_{eq}$ for MMSE receiver can be computed using $MN$-point modulus square to compute $|\channelfreqmatrix|^2$ which requires $3MN$ flops, one $MN$-point real adder which requires $MN$ flops and one $MN$-point real and complex divisions which requires  $3MN$ flops. Thus, MMSE FDE needs extra $7MN$ Flops over ZF FDE.  Computational complexity of different receivers in multipath fading channel is provided in Table~\ref{tab:complexity:multipath}.  
\begin{table*}[h]
\caption{Computational Complexity of Different Receivers in Multipath Fading Channel.  $\flopdft{MN}$, $\flopdft{N}$ and $\flopdft{M}$ are flops required to compute NM, N and M point FFT respectively.}
\label{tab:complexity:multipath}
\centering
\begin{tabular}{| >{\centering\arraybackslash}m{0.2\linewidth}| >{\centering\arraybackslash}m{0.35\linewidth}| >{\centering\arraybackslash}m{0.35\linewidth}|}
\hline
\textbf{Structure} & \textbf{Number of Flops for ZF FDE}&  \textbf{Number of Flops for MMSE FDE}\\ \hline \hline \vspace{0.1cm}
OFDM & $M\times \flopdft{N}+6MN$ & $M\times \flopdft{N}+13MN$ \\ \hline
 ZF/MF Receiver in \cite{farhang_low-complexity_2016,matthe_precoded_2016} & $2\times \flopdft{MN}+M\times\flopdft{N}+3M^2N+8MN-2N$ & $2\times \flopdft{MN}+M\times\flopdft{N}+3M^2N+15MN-2N$ \\ \hline \vspace{0.1cm}
 ZF/MF Receiver in \cite{michailow_generalized_2014} & $4\times\flopdft{MN}+2N\times\flopdft{M}+6LMN+I(4N\times\flopdft{M}+6MN)+6MN$, where $1<L\leq N$ & $4\times\flopdft{MN}+2N\times\flopdft{M}+6LMN+I(4N\times\flopdft{M}+6MN)+13MN$ \\ \hline \vspace{0.1cm}
  MMSE Receiver in \cite{farhang_low-complexity_2016,matthe_precoded_2016} & $2\times \flopdft{MN}+M\times\flopdft{N}+12M^2N+15MN$  &$2\times \flopdft{MN}+M\times\flopdft{N}+12M^2N+22MN$  \\ \hline \vspace{0.1cm}
  SIC Receiver in \cite{gaspar_low_2013} & $4\times\flopdft{MN}+2N \times \flopdft{M}+6LMN+I(4N\times \flopdft{M}+6MN)+6MN$   & $4\times\flopdft{MN}+2N \times \flopdft{M}+6LMN+I(4N\times \flopdft{M}+6MN)+13MN$   \\ \hline
  Proposed ZF/MF Receiver & $2\times \flopdft{MN}+M\times\flopdft{M}+2N\times\flopdft{M}+12MN$ & $2\times \flopdft{MN}+M\times\flopdft{M}+2N\times\flopdft{M}+19MN$ \\ \hline
  Proposed biased MMSE Receiver & $2\times \flopdft{MN}+M\times\flopdft{M}+2N\times\flopdft{M}+17MN$ & $2\times \flopdft{MN}+M\times\flopdft{M}+2N\times\flopdft{M}+24MN$\\ \hline
  Proposed unbiased MMSE Receiver & $2\times \flopdft{MN}+M\times\flopdft{M}+2N\times\flopdft{M}+23MN$ & $2\times \flopdft{MN}+M\times\flopdft{M}+2N\times\flopdft{M}+30MN$ \\ \hline
%
\end{tabular}

\end{table*}

\begin{figure}[h]
\begin{subfigure}{0.49\textwidth}
\includegraphics[width=\linewidth]{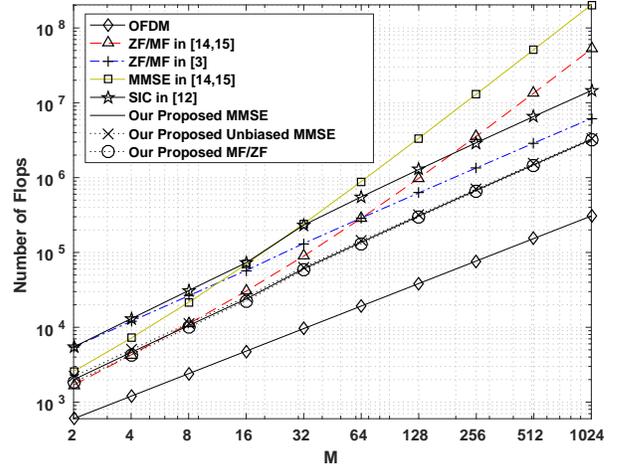}
\caption{$N=16$, $M\in [2,~ 1024]$ }
\label{fig:complexityfading:flops:M}
\end{subfigure}
~
\begin{subfigure}{0.49\textwidth}
\includegraphics[width=\linewidth]{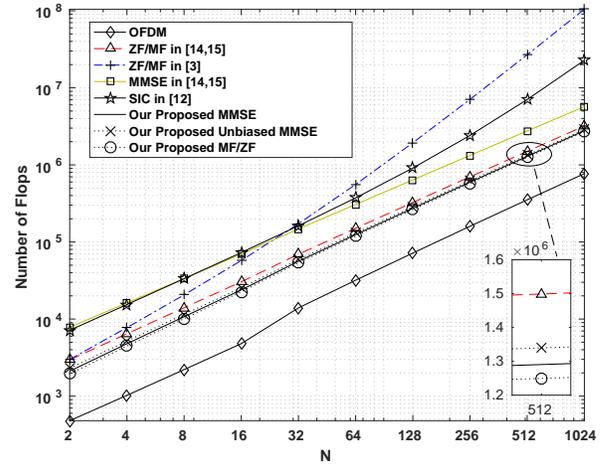}
\caption{ $M=16$, $N\in [2 ,~1024]$  }
\label{fig:complexityfading:flops:N}
\end{subfigure}
\caption{Computational Complexity of different Receivers in multipath fading channel with MMSE FDE.}
\label{fig:complexityfading}
\end{figure}
Complexities presented in Table~\ref{tab:complexity:multipath} are plotted in Fig.~\ref{fig:complexityfading:flops:N} for $N=16$ and $M\in[2,~1024]$ and Fig.~\ref{fig:complexityfading:flops:M} for $M=16$ and $N\in [2,~1024]$ for MMSE FDE\footnote{Since MMSE FDE requires only $7MN$ additional flops over ZF FDE. Results for ZF FDE will also be similar.}. It is worthwhile to note that complexity required for MMSE FDE will be added to all low complexity ZF/MMSE self-interference equalizers (our proposed and ones in \cite{michailow_generalized_2014,matthe_generalized_2014,michailow_generalized_2012,
 farhang_low-complexity_2016,gaspar_low_2013} ). Hence, comparative complexities of GFDM receivers will be similar as in the case of AWGN channel (see Sec.~\ref{sec:complexity:rx:awgn}). However, comparative complexity of GFDM receivers to OFDM receiver will increase as OFDM saves complexity in channel equalization. For instance, when $N=1024$ and and $M=16$, our receivers have 2.7 times higher complexity than OFDM. Whereas, when $M=1024$ and $N=16$, our receivers are 6.5 times more complex than OFDM.

It can be concluded that our proposed ZF receiver in AWGN as well as multipath fading channel is around 3 times simpler than ZF receivers in \cite{michailow_generalized_2014,matthe_generalized_2014,michailow_generalized_2012,
 farhang_low-complexity_2016}. Our Proposed Unbiased MMSE receiver in AWGN as well as multipath fading channel is 3 to 300 times simpler than biased MMSE receiver in \cite{michailow_generalized_2014,matthe_generalized_2014,michailow_generalized_2012,
 farhang_low-complexity_2016}. Our receivers retain low computational load for arbitrary value of $M$, $N$ and pulse shape. Our receivers also retain the optimal ZF and MMSE performance since they are direct. We will investigate performance optimality in detail in the next section.


\section{Performance of Low Complexity GFDM Transceiver}
In this section, we present Bit Error Rate (BER) performance of our proposed low complexity transceiver. As discussed earlier our proposed transceiver does not make any assumption  related to GFDM parameters as well as provides lowest computational load for arbitrary values of $M$, $N$ and pulse shape.  We present BER performance of our low proposed receiver in AWGN as well as multipath fading channel.  We compare the performance of our proposed receivers with the direct implementation of respective receivers given in Sec.~\ref{sec:systemmodel}. Simulation parameters are provided in Table~\ref{tab:simu:para:mmse}. We consider a system bandwidth of 1.92 MHz. We test our system for two cases namely (i) CaseI : $N=128$ and $M=8$ denoting a system where value of $N$ is high $M$ is low as well as sub-carrier bandwidth is low at 15 KHz and (ii) Case II : $N=8$ and $M=128$ denoting a system where value of $M$ is high $N$ is low as well as sub-carrier bandwidth is high at 240 KHz. Each point in our BER curve is calculated for $10^7$ transmission bits.
\begin{table} [h]
\caption{Simulation Parameters}
\label{tab:simu:para:mmse}
\centering
\begin {tabular}{| >{\centering\arraybackslash}m{0.29\linewidth}| >{\centering\arraybackslash}m{0.29\linewidth}|m{0.19\linewidth}|}
\hline
\textbf{Parameters} & \textbf{Case I} & \textbf{Case II} \\ \hline \hline
Number of Sub-carriers $N$ & 128 & 8\\ \hline
Number of Timeslots $M$ & 8 & 128 \\ \hline
Sub-carrier Bandwidth & 15 KHz & 240 KHz\\ \hline
\end{tabular}

\begin{tabular}{| >{\centering\arraybackslash}m{0.29\linewidth}| >{\centering\arraybackslash}m{0.56\linewidth}|}
Number of Sub-carriers for OFDM & 128 \\ \hline
Mapping & 16 QAM  \\ \hline
Pulse shape & modified RC \cite{nimr_optimal_2017} with ROF = 0.1 or 0.9 \\ \hline
Channel & AWGN or ETU\cite{series2009guidelines}\\ \hline
Carrier Frequency & 2.4 GHz \\ \hline
Maximum Doppler shift for multipath channel & 100 Hz \\ \hline

RMS delay Spread for multipath channel & 1 $\mu$ sec \\ \hline
Coherence Bandwidth for multipath channel & 20 KHz \\ \hline
Channel Equalization for multipath channel & MMSE FDE \\ \hline 
\end{tabular}

\end{table}

Next, we discuss about the parameter $M$. Let the system bandwidth remain constant while we discuss about changing $M$. There are two ways to change $M$ (a) Keeping the symbol duration  (or sub-carrier bandwidth) same and (b) Keeping the block duration (or value of $NM$) same. Without any loss of generality, in this article we increase $M$ while keeping the block duration constant. As $M$ is increasing $N$ is decreasing by the same amount i.e. sub-carrier bandwidth is increasing or symbol duration is decreasing while keeping the block length same. This way latency of the system does not change. One might ask the question about it's consequences on the performance especially so in fading channel. The design criteria to enable FDE in GFDM is different than that of in OFDM. In OFDM, sub-carrier bandwidth $\delta f_{\rm OFDM} < B_c$, where $B_c$ is coherence bandwidth of multipath channel. In case of GFDM, $\delta f_{\rm GFDM} < M \times B_c$ since the frequency resolution for FDE in case of GFDM is $\frac{1}{NM}$. So, in the light of aforementioned discussion, if we keep the value $NM$ constant, usage of FDE remains valid for GFDM. Additionally, PAPR will decrease as $M$ increases (since $N$ decreases).

\subsection{AWGN Channel}
\begin{figure}[h]
\includegraphics[width=\linewidth]{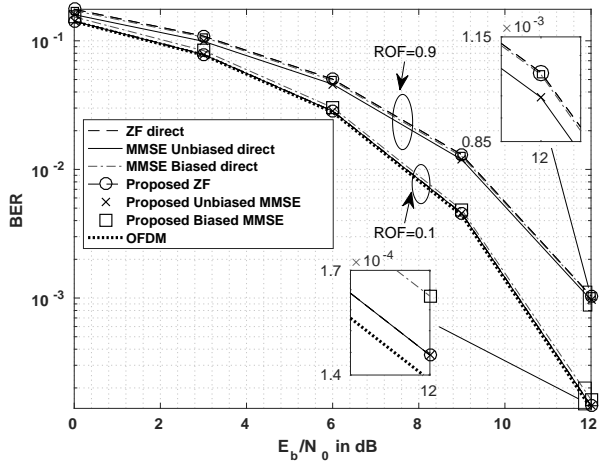}
\caption{ Uncoded BER performance of proposed GFDM transceiver and direct implementation GFDM transceiver for 16 QAM modulation in AWGN channel. Case I : $N=128$, $M=8$.}
\label{fig:ber128x8awgn}
\end{figure}
 BER performance of low complexity GFDM system over AWGN channel for Case I is provided in Fig~\ref{fig:ber128x8awgn}.  It is observed that there is no degradation in performance as compared to the direct implementation. For ROF value of 0.1, ZF and unbiased MMSE have similar performance whereas biased MMSE has worse performance than ZF and unbiased MMSE receivers. Bias correction provides SNR gain of 0.1 dB at the BER of $10^{-4}$. For ROF value of 0.9, unbiased MMSE receiver provides SNR gain of 0.1 dB at the BER of $10^{-3}$ over biased MMSE receiver which has similar performance to ZF receiver. It can be concluded that our proposed low complexity bias correction in MMSE holds importance.

\begin{figure}[h]
\includegraphics[width=\linewidth]{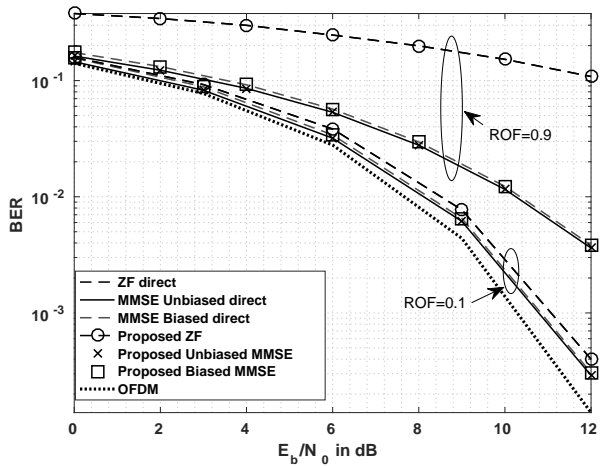}
\caption{Uncoded BER performance of proposed GFDM transceiver and direct implementation GFDM transceiver for 16 QAM modulation in AWGN channel. Case II : $N=8$, $M=128$.}
\label{fig:ber8x128awgn}
\end{figure}
\begin{figure}[h]
\includegraphics[width=\linewidth]{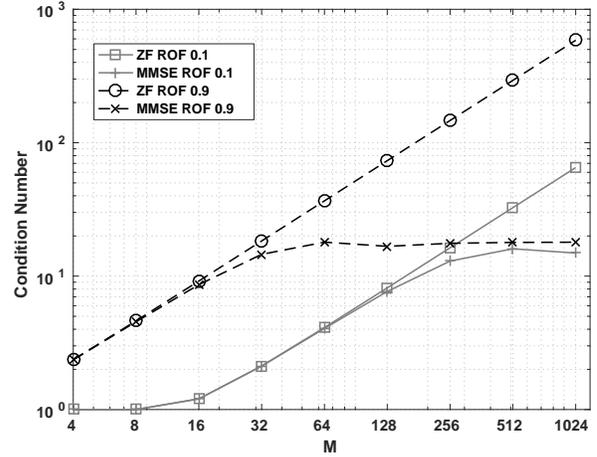}
\caption{Condition number of GFDM equalization matrix $\bA_{eq}$ for ZF and MMSE equalization for $M\in[2,~1024]$, $N=16$, ROF $\in\{0.1~0.9\}$ and $\snr=$30 dB.}
\label{fig:condition}
\end{figure}
 BER performance of low complexity GFDM system over AWGN channel for Case II is presented in Fig~\ref{fig:ber8x128awgn}.  It is observed that there is no degradation in performance as compared to the direct implementation. For ROF value of 0.1, our proposed unbiased MMSE receiver achieves SNR gain of 0.2 dB our proposed ZF receiver. For ROF value of 0.9, our proposed unbiased MMSE receiver outperforms our proposed ZF receiver.  To get more insight into behaviour of receivers at large $M$'s, condition number of $\bA_{eq}$ for ZF and MMSE receivers is plotted for $M\in[2,~1024]$, $N=16$, ROF $\in{0.1~0.9}$ and $\snr=$ 30 dB in Fig.~\ref{fig:condition}. Condition number of $\bA_{eq}$ for ZF receiver increases with $M$ which degrades it's performance as $M$ increases \cite{matthe_generalized_2014,nimr_optimal_2017}. However, condition number for $\bA_{eq}$ for MMSE receiver saturates at the $M=32$ and $M=256$ for ROF value of 0.1 and 0.9. This means that for large values of $M$'s, MMSE receiver  outperforms ZF receiver. Interestingly, performance gain achieved by our proposed MMSE receiver comes with mere 3 percent additional complexity over our proposed receiver.   

It can be concluded that our proposed ZF and MMSE low complexity self-interference equalizers do not incur any performance loss as they maintain their optimum performance. When $M$ is large, MMSE equalizer is preferred as it outperforms ZF equalizer as well as does not incur additional significant complexity over ZF equalizer.

\subsection{Multipath Fading Channel}
In this section, we present BER performance of our proposed low complexity transceiver in multipath fading channel.  We consider 3GPP extended typical urban channel (ETU) \cite{series2009guidelines} whose whose channel delay and channel power are [0 50 120 200 230 500 1600 2300 5000] $\mu$s and [-1 -1 -1 0 0 0 -3 -5 -7] dB, respectively. The
CP is chosen long enough to accommodate the wireless channel
delay spread. We consider a coded system to compare our results. We used convolutional code with code rate of 0.5 with a constraint length of 7 and code generator polynomials of 171 and 133.

For multipath fading channel Case I indicates a scenario where $\delta f_{gfdm} < \frac{B_c}{N}$ whereas Case II indicates a scenario where $\frac{B_c}{N}<\delta f_{gfdm} < \frac{B_c}{NM}$. For both cases, we consider $\delta f_{ofdm} < \frac{B_c}{N}$ .
\begin{figure}[h]
\includegraphics[width=\linewidth]{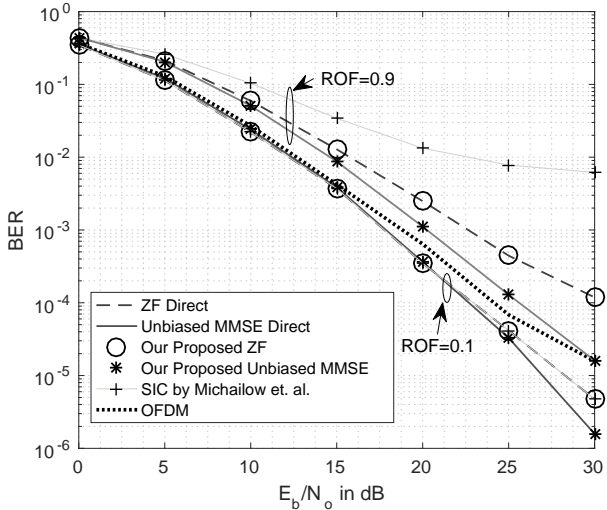}
\caption{Coded BER Performance of proposed GFDM transceiver and direct implementation GFDM transceiver for 16 QAM modulation in fading channel. Case I : $N=128$, $M=8$ and sub-carrier bandwidth is less than channel coherence bandwidth.}
\label{fig:ber128x8fading}
\end{figure}

BER of our proposed low complexity transceiver in multipath fading channel for Case I is plotted in Fig.~\ref{fig:ber128x8fading}. Our proposed receivers do not incur any performance loss over direct implementations. Our proposed MMSE receiver shows significant BER performance gain over other GFDM receivers. For ROF value of 0.1, our proposed MMSE receiver gives the best performance. MMSE receiver achieves SNR gain of  4 dB over OFDM at BER of $10^{-5}$. This SNR gain over OFDM is due to higher frequency resolution of GFDM \cite{michailow_generalized_2014}. MMSE receiver also achieves SNR gain of 2.5 dB over our proposed ZF receiver and SIC receiver in \cite{gaspar_low_2013} at BER of $4\times 10^{-6}$. For ROF value of 0.9, our proposed MMSE receiver shows a SNR gain of 5 dB over our proposed ZF receiver at BER of $10^{-4}$. BER of SIC receiver in \cite{gaspar_low_2013} floors at $10^{-2}$ and has SNR loss of 15 dB over our proposed MMSE receiver. 

\begin{figure}[h]
\includegraphics[width=\linewidth]{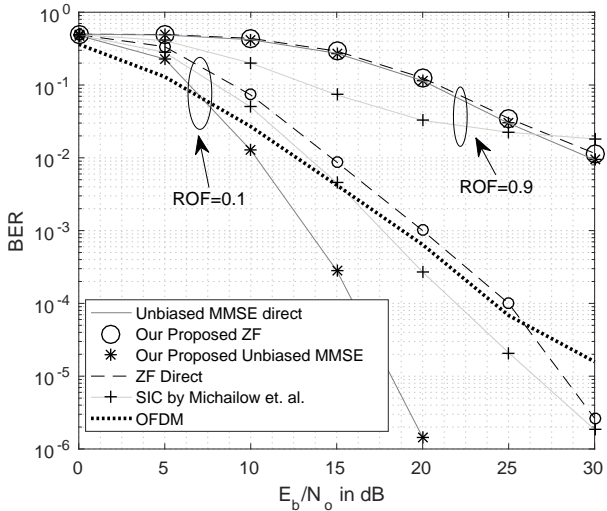}
\caption{ Coded BER performance of proposed GFDM transceiver and direct implementation GFDM transceiver for 16 QAM modulation in fading channel. Case II : $N=8$, $M=128$ and sub-carrier bandwidth is around ten times channel coherence bandwidth.}
\label{fig:ber8x128fading}
\end{figure}
 
BER of our proposed low complexity transceiver in multipath fading channel for Case II is plotted in Fig.~\ref{fig:ber8x128fading}. Our proposed receivers do not incur any performance loss over direct implementations. Since $\delta f_{gfdm}\sim 10\times B_{c}$, GFDM receivers show frequency diversity gain for ROF value of 0.1. Maximum diversity gain is extracted by our proposed MMSE receiver. Our proposed MMSE receiver has SNR gain of 12 dB, 9 dB and 8 dB over OFDM receiver, SIC receiver in \cite{gaspar_low_2013} and our proposed receiver respectively. When ROF value is 0.9, GFDM receivers show huge performance degradation.

 It can be concluded that our proposed MMSE receiver show significant performance gain over our proposed ZF receiver in \cite{gaspar_low_2013} in multipath fading channel with mere 3 percent additional computational load. Performance gain of our proposed MMSE receiver is achieved with 100 times lesser computational load.

\section{Conclusion}
In this work, we have proposed  low complexity GFDM transceivers. We used a special factorization of GFDM modulation matrix to design low complexity GFDM transceiver without incurring any performance loss. GFDM modulation matrix is factorized in terms of DFT matrices and diagonal matrix to design low complexity GFDM transmitter. This factorization was also used to derive closed form expression of MF, ZF and MMSE self-interference equalizers. We also derived closed form expression for bias correction of MMSE equalizer. These closed-form expressions lead to low complexity FFT based self-interference equalizers. Two stage receivers in which channel equalization is followed by self-interference equalizers was proposed for multipath fading channel. FFT based low complexity implementation of these receivers was presented.

  Computational complexity of our transceiver was computed and compared with the existing ones known so far to have the lowest complexity. Our proposed receivers MF, ZF and MMSE receivers are shown to have similar complexity and log-linear with number of symbols. We found that our transceivers show huge complexity reduction as compared  to ones in \cite{ michailow_generalized_2014, farhang_low-complexity_2016}. Our proposed transmitter and ZF receiver achieves around 100 times complexity reduction over ZF in \cite{farhang_low-complexity_2016}.  Over 300 times complexity reduction can be achieved through our MMSE receiver compared with the proposed MMSE receiver in \cite{farhang_low-complexity_2016}.  Our transceivers are also shown to have 2 to 9 times more complex than OFDM transceiver. When $M$ is large, our proposed MMSE receiver outperforms our proposed ZF receiver without any signification computational complexity addition. Such significant complexity reduction makes our transceiver an attractive choice for hardware implementation of GFDM systems. 

\begin{appendices}

\section{Proof of Lemma~\ref{th:factorization G}} \label{app:lem1}

The matrix $\bR$ can be written as,
\be
\bR = (\dftmat_M \kron \bI_N) \bS (\dftmat_M \kron \bI_N)^{\rm H}.
\ee
Using the properties of Kronecker product\cite{schacke_kronecker_2013}, $\bR$ can be further simplified to,
\begin{equation}
\begin{aligned}
\bR &= \permut^{\rm T} (\bI_N \kron \dftmat_M) \permut \bS \permut^{\rm T} (\bI_N \kron \dftmat_M^{\rm H}) \permut\\
&= \permut^{\rm T} \diagdft{M} \bar{\bS} \diagdft{M}^{\rm H} \permut,
\end{aligned} 
\end{equation}
where, $\bar{\bS}= \permut \bS \permut^{\rm T}$. Definition of $\bar{\bS}$, given in (\ref{eqn:sr}), can be directly obtained by using the definition of $\permut$ given in (\ref{eqn:permut}).

\section{Proof of Theorem~\ref{Th:modmat}} \label{app:Th1}
Using Lemma~\ref{th:factorization G} for $\bA$ given in (\ref{eqn:Modmatrix}), $\bA$ can be given as,
\be
\bA= \permut^{\rm T} \diagdft{M} \bar{\bD} \diagdft{M}^{\rm H} \permut \diagdft{N},  
\ee
where,  $\bar{\bD} = \permut \bD \permut^{\rm T}=diag\{\bar{\lambda}(0),~\bar{\lambda}(1)\cdots\bar{\lambda}(MN-1)\}$. Using (\ref{eqn:sr}) in (\ref{eqn:DiagonalvaluesofDmatrix}), the $r^{\rm th}$ diagonal value $\bar{\lambda}(r)$ can be given as,
\begin{equation} \label{eqn:proof:Th1}
\begin{aligned}
\bar{\lambda}(r)=\lambda((r \mod M)N+\floor*{\frac{r}{M}}), ~ \text{for} ~ 0\leq r \leq MN-1 \\
\begin{split}
&= \sum_{m=0}^{M-1}{g[mN+ ((r \mod M)N+\floor*{\frac{r}{M}}) \mod N]} \times \\ &{\omega^{-m\floor*{\frac{(r \mod M)N+\floor*{\frac{r}{M}})}{N}}}]}
\end{split}
\end{aligned}
\end{equation}
Now, using the fact that $\floor*{\frac{r}{M}}$ will vary from $0$ to $N-1$, $r \mod M $ will vary from $0$ to $M-1$, $((r \mod M)N+\floor*{\frac{r}{M}}) \mod N] = \floor*{\frac{r}{M}}$ and $\floor*{\frac{(r \mod M)N+\floor*{\frac{r}{M}})}{N}}= r \mod M$. (\ref{eqn:transmitclosedformdbar}) can be obtained by putting these simplified values in  (\ref{eqn:proof:Th1}).

\section{Proof of Theorem~\ref{Th:Aeq}} \label{app:proofTh2}
 Using the properties of unitary matrices, Theorem~\ref{Th:modmat} and (\ref{eqn:gfdmeualizermatrix}), $\bA_{eq}$ for MF and $\bA_{eq}^{\rm MF}$, ZF can be given as,
 \be \label{eqn:MFmatrix}
 \begin{aligned}
 \bA_{eq}^{\rm MF} &= [\permut^{\rm T} \diagdft{M} \bar{\bD} \diagdft{M}^{\rm H} \permut \diagdft{N}]^{\rm H}\\
 &= \diagdft{N}^{\rm H} \permut^{\rm T} \diagdft{M} \bar{\bD}^{\rm H} \diagdft{M}^{\rm H} \permut. 
 \end{aligned}
 \ee
 
 \be \label{eqn:ZFmatrix}
 \begin{aligned}
 \bA_{eq}^{\rm ZF} &= [\permut^{\rm T} \diagdft{M} \bar{\bD} \diagdft{M}^{\rm H} \permut \diagdft{N}]^{-1}\\
 &= \diagdft{N}^{\rm H} \permut^{\rm T} \diagdft{M} \bar{\bD}^{-1} \diagdft{M}^{\rm H} \permut.
 \end{aligned}
 \ee
 Now, let, $\unitarymatrix_1=\permut^{\rm T} \diagdft{M}$ and $\unitarymatrix_2= \diagdft{M}^{\rm H} \permut \diagdft{N}$. Both $\unitarymatrix_1$ and $\unitarymatrix_2$ are unitary matrix and $\bA=\unitarymatrix_1 \bar{\bD} \unitarymatrix_2$. Using this, $\AHA= \unitarymatrix_2^{\rm H} abs\{\bar{\bD}\}^2 \unitarymatrix_2$. Using above definitions and (\ref{eqn:gfdmeualizermatrix}), $\bA_{eq}^{MMSE}$ can be given as,
 \be \label{eqn:MMSEmatrix}
 \begin{aligned}
\bA_{eq}^{MMSE} &= [\mI+\unitarymatrix_2^{\rm H} abs\{\bar{\bD}\}^2 \unitarymatrix_2]^{-1} \unitarymatrix_2^{\rm H} \bar{\bD}^{\rm H} \unitarymatrix_1^{\rm H} \\
&= \unitarymatrix_2^{\rm H} [\mI+abs\{\bar{\bD}\}^2]^{-1} \bar{\bD}^{\rm H} \unitarymatrix_1^{\rm H} \\
&= \diagdft{N}^{\rm H} \permut^{\rm T} \diagdft{M}  [\mI+abs\{\bar{\bD}\}^2]^{-1} \bar{\bD}^{\rm H} \diagdft{M}^{\rm H} \permut.
\end{aligned}
 \ee
  Unbiased MMSE equalizer matrix, $\bA_{eq}^{un-MMSE}= \matrixmmsebias{gfdm} \bA_{eq}^{MMSE}$.  Now, using the definition of $\bA$, given in (\ref{eqn:Modmatrix}), $\AHA=\diagdft{N}^{\rm H} \bdft abs\{\bD\}^2 \bdft^{\rm H} \diagdft{N}=(\bI_M \kron \dftmat_N^{\rm H})(\dftmat_M \kron \bI_N)  abs\{\bD\}^2 (\dftmat_M^{\rm H} \kron \bI_N)(\bI_M \kron \dftmat_N)$. Using the properties of Kronecker product, this can be further simplified as, $\AHA=(\dftmat_M \kron \dftmat_N^{\rm H}) abs\{\bD\}^2 (\dftmat_M^{\rm H} \kron \dftmat_N)=\unitarymatrix_3 abs\{\bD\}^2 \unitarymatrix_3^{\rm H}$, where, $\unitarymatrix_3= \dftmat_M \kron \dftmat_N^{\rm H}$ is a unitary matrix. The matrix $\unitarymatrix_3$ can be written as,
  \be \label{eqn:unitarymarix3}
  \unitarymatrix_3=\begin{pmatrix}
  \dftmat_N^{\rm H} &  \dftmat_N^{\rm H} & \ldots &  \dftmat_N^{\rm H}\\
   \dftmat_N^{\rm H} &  \omega \dftmat_N^{\rm H} & \ldots &  \omega^{\rm M-1}\dftmat_N^{\rm H}\\
   \vdots & \vdots &\vdots & \vdots \\
    \dftmat_N^{\rm H} &  \omega^{\rm M-1}\dftmat_N^{\rm H} & \ldots &  \omega^{\rm (M-1)^2}\dftmat_N^{\rm H}
  \end{pmatrix}.
  \ee
   Using above given definition of $\AHA$, $\matrixmmsebias{gfdm}$ in (\ref{eqn:gfdmmatrixmmsebias}) can be given as,
   \be \label{eqn:gfdmmatrixmmsebias1}
   \begin{aligned}
   \matrixmmsebias{gfdm} &= diag\{\unitarymatrix_3 [\mI+abs\{\bD\}^2]^{-1} abs\{\bD\}^2 \unitarymatrix_3^{\rm H}\}\\
   &= diag\{\unitarymatrix_3 \tilde{\bD} \unitarymatrix_3^{\rm H}\},
   \end{aligned}
   \ee
  where, $\tilde{\bD}=[\mI+abs\{\bD\}^2]^{-1} abs\{\bD\}^2$ is a diagonal matrix. Using the definition of $\bD$ in (\ref{eqn:DiagonalvaluesofDmatrix}), $r^{\rm th}$ diagonal value of $\tilde{\bD}$ can be given as,
  \be \label{eqn:Dtilde}
  \tilde{\bD}(r,r) = \frac{|{\lambda}_r|^2}{|{\lambda}_r|^2+\snri},~ 0\leq r \leq MN-1.  
  \ee 
   It is straight forward to obtain (\ref{eqn:th:gfdmmmsebiasmatrix}) by using (\ref{eqn:unitarymarix3}-\ref{eqn:Dtilde}) and properties of DFT matrix.
\section{Comparison of flops for FFT/IFFT algorithm when $N$ or $M$ is small} \label{sec:app:fft}

\begin{figure}[h]
\includegraphics[width=\linewidth]{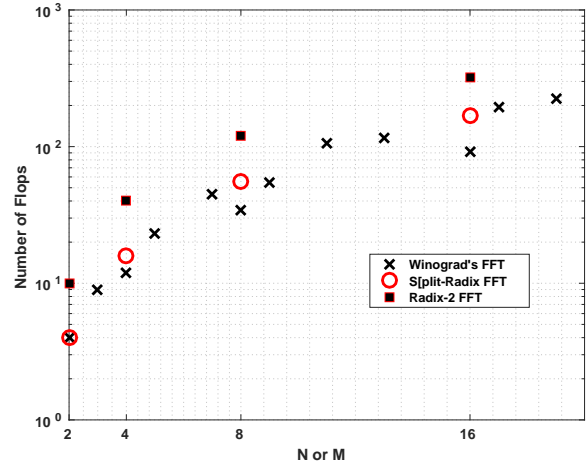}
\caption{Flops required for different FFT algorithms. This figure shows that Winograd's FFT achieves least complexity when $N$ or $M$ is small. }
\end{figure}
\end{appendices}
\bibliographystyle{IEEEtran}
\bibliography{Complexity,precoded_GFDM,frameAndgaborTh,FilteringPulseShaping,5G,books,GFDM_new,Comparision of Waveforms}
\end{document}